\theoremstyle{plain}
\newtheorem{theorem}{Theorem}[section]
\newtheorem{lemma}[theorem]{Lemma}
\newtheorem{definition}[theorem]{Definition}
\newtheorem*{definition*}{Definition}
\newtheorem{corollary}[theorem]{Corollary}
\newtheorem{claim}[theorem]{Claim}
\theoremstyle{remark}
\newtheorem*{rem}{Remark}
\newtheorem{exmp}{Example}
\newtheorem*{exmp*}{Example}
\newcommand{\N}{\mathbb{N}}
\newcommand{\Z}{\mathbb{Z}}
\newcommand{\Q}{\mathbb{Q}}
\newcommand{\C}{\mathbb{C}}
\newcommand{\F}{\mathbb{F}}
\newcommand{\poly}{\mathrm{poly}}
\DeclareMathOperator{\codim}{\mathrm{codim}}
\newcommand{\aff}{\mathbb{A}}
\newcommand{\proj}{\mathbb{P}}
\newcommand{\gr}{\mathrm{G}}
\newcommand{\spn}{\mathrm{span}}
\newcommand{\pgr}{\mathbb{G}}
\newcommand{\simple}{\mathrm{sim}}
\newcommand{\eps}{\varepsilon}
\renewcommand{\epsilon}{\eps}
\begin{document}
 
\title{Variety Evasive Subspace Families\footnote{A previous version of this paper \cite{Guo21} appeared in the 36th Computational Complexity Conference (CCC 2021).}}

\author{Zeyu Guo\footnote{Department of Computer Science and Engineering, The Ohio State University.
Email: \href{mailto:zguotcs@gmail.com}{\texttt{zguotcs@gmail.com}}}
}

\date{}

\maketitle

\begin{abstract}
We introduce the problem of constructing explicit \emph{variety evasive subspace families}. 
Given a family $\mathcal{F}$ of subvarieties of a projective or affine space, a collection $\mathcal{H}$ of projective or affine $k$-subspaces is \emph{$(\mathcal{F},\epsilon)$-evasive} if
for every $\mathcal{V}\in\mathcal{F}$, all but at most $\epsilon$-fraction of $W\in\mathcal{H}$ intersect every irreducible component of $\mathcal{V}$ with (at most) the expected dimension.
The problem of constructing such an explicit subspace family generalizes both deterministic black-box polynomial identity testing (PIT) and the problem of constructing explicit (weak) lossless rank condensers. 

Using Chow forms, we construct explicit  $k$-subspace families of polynomial size that are evasive for all varieties of bounded degree in a projective or affine $n$-space. As one application, we obtain a complete derandomization of Noether's normalization lemma for varieties of low degree in a projective or affine $n$-space. In another application, we obtain a simple polynomial-time black-box PIT algorithm for depth-$4$ arithmetic circuits with bounded top fan-in and bottom fan-in that are not in the Sylvester--Gallai configuration, improving and simplifying a result of Gupta (ECCC TR 14-130).

As a complement of our explicit construction, we prove a tight lower bound for the size of $k$-subspace families that are evasive for degree-$d$ varieties in a projective $n$-space. When $n-k=n^{\Omega(1)}$, the lower bound is superpolynomial unless $d$ is bounded. 
The proof uses a dimension-counting argument on Chow varieties that parametrize projective subvarieties.
\end{abstract}

\section{Introduction}

Polynomial identity testing  (PIT) is a fundamental problem in the areas of derandomization and algebraic complexity theory. 
The problem asks whether a multivariate polynomial, computed by an arithmetic circuit, formula, or other algebraic computational models, is identically zero.  For example, the polynomial $(X+Y)(X-Y) - X^2-Y^2$ is identically zero while $(X+Y)^2-X^2$ is not. 
%
%
%

It is easy to solve PIT in randomized polynomial time, as we may simply evaluate  the input polynomial at a random point and check if the evaluation is zero.
On the other hand, finding a deterministic polynomial-time PIT algorithm for general arithmetic circuits is  a long-standing open problem.
Such algorithms are known for some special cases, and we refer the readers to the surveys \cite{Sax09, Sax13, SY10} for  details.

\emph{Black-box} PIT algorithms are a special kind of PIT algorithm.
A (deterministic) black-box PIT algorithm tests if a polynomial in a family $\mathcal{F}$ is zero by constructing a \emph{hitting set}  for $\mathcal{F}$, which is a finite collection $\mathcal{H}$ of evaluation points with the following property: for any nonzero  $Q\in \mathcal{F}$, there exists $p\in\mathcal{H}$ such that the evaluation of $Q$ at $p$ is nonzero. 
After constructing a hitting set $\mathcal{H}$ for $\mathcal{F}$, the algorithm simply checks if the evaluation of the given polynomial at every point in $\mathcal{H}$ is zero. 
The problem of designing a deterministic black-box PIT algorithm for polynomials in $\mathcal{F}$ is thus equivalent to deterministically constructing a hitting set for $\mathcal{F}$.
To make the algorithm efficient, such a hitting set should be small and efficiently computable. 
Hitting sets are also called \emph{correct test sequences} and are studied in \cite{HS80, PS22}.


From a geometric perspective,  an $n$-variate nonzero polynomial $Q$ over an algebraically-closed field $\F$ defines  a \emph{hypersurface} $\mathcal{V}(Q):=\{\alpha\in \F^n: Q(\alpha)=0\}$ of $\F^n$.
A hitting set $\mathcal{H}$ for $\mathcal{F}$ has the property that for every nonzero $Q\in\mathcal{F}$, there exists a point $p\in \mathcal{H}$ that is disjoint from the hypersurface $\mathcal{V}(Q)$, or we say $p$ \emph{evades} $\mathcal{V}(Q)$.  It is natural to consider the generalization of this property to higher dimensions/codimensions. Namely, we want to construct a finite collection $\mathcal{H}$ of  \emph{affine $k$-subspaces} (i.e. affine subspaces of dimension $k$) such that for every  \emph{variety} $\mathcal{V}\subseteq\F^n$ (i.e., solution set of a set of polynomial equations) from a certain family, some (or most) $W\in \mathcal{H}$ \emph{evade} $\mathcal{V}$, in the sense that the dimension of the intersection $\mathcal{V}\cap W$ is bounded by the expected dimension achieved by $W$ in general position.
A similar property can be defined for projective $k$-spaces, to be defined below.
We call such a collection $\mathcal{H}$ of projective or affine $k$-subspaces a \emph{variety evasive subspace family}. The formal definition is given below.


\subsection{Variety Evasive Subspace Families}

Let $\F$ be an algebraically closed field.
An \emph{affine $n$-space} $\aff^n$, as a set, is simply defined to be the vector space $\F^n$.
We also need the notion of a \emph{projective $n$-space}, denoted by $\proj^n$, which is (intuitively) the set of lines passing through the origin $\mathbf{0}$ of $\aff^{n+1}$. Formally, it is defined to be  the quotient set $(\aff^{n+1}\setminus\{\mathbf{0}\})/\sim$, where $\sim$ is the equivalence relation defined by scaling, i.e., $u\sim v$ if $u=cv$ for some nonzero scalar $c\in\F$.

An \emph{(affine) subvariety} $\mathcal{V}\subseteq\aff^n$ is   the set of  common zeros of a set of $n$-variate polynomials over $\F$. 
Similarly, a \emph{(projective) subvariety} $\mathcal{V}\subseteq\proj^n$ is  the set of common zeros  of a set of \emph{homogeneous} $(n+1)$-variate polynomials  over $\F$, where we represent each element of $\proj^n$ as an $(n+1)$-tuple in $\aff^{n+1}$. 
In this paper, a \emph{variety} refers to a  subvariety of a projective or affine space, and is said to be \emph{irreducible} if it cannot be written as a union of finitely many proper subvarieties.\footnote{Varieties in this paper  are   not necessarily irreducible. They are often called \emph{algebraic sets} in literature.}

The \emph{dimension}  of a variety $\mathcal{V}$, denoted by $\dim(\mathcal{V})$, is intuitively the ``degree of freedom'' of picking a point in the variety.
See \cref{sec_ag} for its formal definition.
In particular, for a linear subspace $V\subseteq\aff^n$, the dimension of $V$ as a variety is simply its linear-algebraic dimension.

For  two  irreducible subvarieties $\mathcal{V}_1$ and $\mathcal{V}_2$ of $\proj^n$ or $\aff^n$ in \emph{general position}, we  expect  the dimension of $\mathcal{V}_1\cap\mathcal{V}_2$ to be $ \dim(\mathcal{V}_1) + \dim(\mathcal{V}_2) - n$ (unless $ \dim(\mathcal{V}_1) + \dim(\mathcal{V}_2) < n$, in which case we expect $\mathcal{V}_1\cap\mathcal{V}_2=\emptyset$).
The following definition captures the condition that $\dim(\mathcal{V}_1\cap \mathcal{V}_2)$ is bounded by the expected dimension.

\begin{definition}[Evading]\label{defi_evade}
Let $\mathcal{V}_1$ and $\mathcal{V}_2$ be irreducible subvarieties of $\proj^n$ or $\aff^n$.
We say $\mathcal{V}_1$ \emph{evades} $\mathcal{V}_2$ if 
\[
\dim(\mathcal{V}_1\cap\mathcal{V}_2)\leq \dim(\mathcal{V}_1) + \dim(\mathcal{V}_2) - n,
\]
where the dimension of an empty set is assumed to be $-\infty$.
In particular, if $\dim(\mathcal{V}_1) + \dim(\mathcal{V}_2) < n$, then $\mathcal{V}_1$ evades $\mathcal{V}_2$ iff $\mathcal{V}_1\cap \mathcal{V}_2=\emptyset$.

More generally, suppose $\mathcal{V}_1$ is irreducible but $\mathcal{V}_2$ is possibly reducible. We say $\mathcal{V}_1$ \emph{evades} $\mathcal{V}_2$ if it evades every irreducible component of $\mathcal{V}_2$. 
\end{definition}

Next, we define \emph{subspace families} and \emph{variety evasive subspace families}.

\begin{definition}[Subspace family]
For $0\leq k\leq n$, a finite collection\footnote{In this paper, a \emph{collection} is a multiset, i.e., its elements are allowed to appear more than once. 
} 
of  $k$-subspaces of $\proj^n$ is called a \emph{(projective) $k$-subspace family} on $\proj^n$. Similarly,  a finite collection of affine $k$-subspaces of $\aff^n$ is called an \emph{affine $k$-subspace family} on $\aff^n$.
\end{definition}


\begin{definition}[Variety evasive subspace family]\label{defi_evasive}
Let $\mathcal{F}$ be a family of subvarieties of $\proj^n$ (resp. $\aff^n$).
Let $\mathcal{H}$ be a $k$-subspace family on $\proj^n$ (resp. affine $k$-subspace family on $\aff^n$) where $0\leq k\leq n$.
Then:
\begin{itemize}
\item We say $\mathcal{H}$ is  \emph{$\mathcal{F}$-evasive}  if for every $\mathcal{V}\in\mathcal{F}$, 
there exists $W\in\mathcal{H}$ that evades $\mathcal{V}$. 
\item  We say  $\mathcal{H}$ is  \emph{$(\mathcal{F}, \epsilon)$-evasive}  if for every $\mathcal{V}\in\mathcal{F}$, 
a random element $W\in\mathcal{H}$ evades $\mathcal{V}$ with probability at least $1-\epsilon$.
\end{itemize}
\end{definition}

\paragraph{Connection with hitting sets.}
\cref{defi_evasive} naturally generalizes the notions of hitting sets in the context of PIT. For example, a collection of points in $\proj^n$ is a hitting set for a family $\mathcal{F}$ of homogeneous polynomials in $\F[X_1,\dots,X_{n+1}]$ iff it is an $\mathcal{F}'$-evasive $0$-subspace family, where $\mathcal{F}'=\{\mathcal{V}(P): P\in\mathcal{F}\}$ is the family of hypersurfaces defined by the polynomials in $\mathcal{F}$. In other words, hitting sets may be viewed as $0$-subspace families that are evasive for varieties of codimension one. 

\paragraph{Connection with lossless rank condensers.} Other than the case of codimension one, we may also consider the special case of degree one, and this leads to another important family of pseudorandom objects, called \emph{(weak) lossless rank condensers} \cite{GR08, FS12, FSS14, FG15}. These objects were used by Gabizon and Raz \cite{GR08} to construct affine extractors. They also play a crucial role in polynomial identity testing \cite{KS11, SS12, FS12, FSS14}.

A lossless rank condenser is defined as follows:
Let $r\leq t\leq n$ be positive integers.
A finite collection $\mathcal{H}$ of matrices $E\in \F^{t\times n}$ is called an \emph{$(r, L)$-lossless rank condenser} if for every matrix $M\in \F^{n\times r}$ of rank $r$, the number of $E\in \mathcal{H}$ satisfying $\mathrm{rank}(EM)<r$ is at most $L$.

The connection between lossless rank condensers and variety evasive subspace families can be seen as follows: 
Let us assume every matrix $E\in \mathcal{H}$ has full rank $t$. 
Such a matrix $E$ corresponds a linear $t$-subspace $W$ of $\F^n$.
On the other hand, a matrix $M\subseteq \F^{n\times r}$ of rank $r$ corresponds to a linear $(n-r)$-subspaces of $\F^n$ via $M\mapsto \ker(M)$, where $\ker(M)=\{u\in \F^n: uM=0\}$ denotes the left kernel of $M$.
It is easy to see that the condition $\mathrm{rank}(EM)=r$ is equivalent to $\dim(W\cap \ker(M))=t-r$.
Passing from $\F^n$ to $\proj^{n-1}$ by taking the quotient modulo scalars, this condition is also equivalent to the condition that the two projective subspaces $\proj W$ and $\proj(\ker(M))$ evade each other.

Every projective $(n-r-1)$-subspace of $\proj^{n-1}$ can be realized as  $\proj(\ker(M))$  for some rank-$r$ matrix $M$.
Therefore, $\mathcal{H}$ is an $(r, L)$-lossless rank condenser iff it is an $(\mathcal{F}, \epsilon)$-evasive $(t-1)$-subspace family on $\proj^{n-1}$, where $\epsilon=L/|\mathcal{H}|$ and $\mathcal{F}$ is the family of all $(n-r-1)$-subspaces of $\proj^{n-1}$.


Rank condensers are central objects in the theory of ``linear-algebraic pseudorandomness'' coined by Guruswami and Forbes \cite{FG15}. Our study of variety evasive subspace families may be seen as one step of extending the theory to a nonlinear setting.

Explicit lossless rank condensers were used to construct explicit (deterministic) \emph{affine extractors} \cite{GR08} and more generally, \emph{extractors for varieties} \cite{Dvi12}. Similar ideas were used to construct explicit \emph{deterministic extractors} (and \emph{rank extractors}) for \emph{polynomial sources} \cite{DGW09}, which also generalize affine extractors.
It is an interesting question to us whether explicit variety evasive subspace families and the related derandomized Noether's normalization lemma (see below) can be similarly useful in this area.

\subsection{Our Results}\label{sec_result}

We have seen that variety evasive subspace families generalize some important and well-studied pseudorandom objects.
This leads to the following natural question: For which interesting families $\mathcal{F}$ of subvarieties can we construct explicit $\mathcal{F}$-evasive or $(\mathcal{F},\epsilon)$-evasive subspace families?

In this paper, we focus on the families of subvarieties of \emph{bounded degree}.
First, we recall the definition of the \emph{degree} of a variety.
\begin{definition}[degree]
The  degree of an irreducible variety $\mathcal{V}$ in $\proj^n$ (resp. $\aff^n$) is the number of intersections of $\mathcal{V}$ with a general projective (resp. affine) subspace of codimension $\dim(\mathcal{V})$. 
Following \cite{HS80, Hei83}, we define the degree of a (possibly reducible) variety
 to be the sum of the degrees of its irreducible components.
\end{definition}

 For convenience, we  introduce the following definition.

\begin{definition}\label{defi_bounddeg}
We say a projective (resp. affine) $k$-subspace family $\mathcal{H}$ on $\proj^n$ (resp. $\aff^n$) is \emph{$(n,d)$-evasive} if it is $\mathcal{F}$-evasive, where  $\mathcal{F}$ is chosen to be the family of all subvarieties of $\proj^n$ (resp.  $\aff^n$) of   degree at most $d$. 
Similarly, we say $\mathcal{H}$ is $\emph{$(n,d,\epsilon)$-evasive}$ if it is $(\mathcal{F},\epsilon)$-evasive.
\end{definition}

\begin{rem}
In \cref{defi_bounddeg}, we do not make any assumption about the dimension of the varieties in $\mathcal{F}$ or their irreducible components. We will see in \cref{sec_equi} that in fact, it suffices to consider the subfamily of equidimensional varieties or even irreducible varieties of dimension $n-k-1$ when constructing variety evasive $k$-subspace families.
\end{rem}

For $n,d\in\N^+$ and $k\in\{0,1,\dots,n-1\}$, define $N(k, d, n)$ by
\[
N(k,d,n):=\min\left\{ \binom{(k'+1)(n'+1+d)}{(k'+1)d}, \binom{(n-k)(n'+1+d)}{(n-k)d} \right\}.
\]
where $k':=\min\{k, d-2\}\leq k$ and $n':=k'+n-k\leq n$.

Our main theorem then states as follows.

\begin{theorem}[Main Theorem]\label{thm_main}
For  $n,d\in\N^+$, $k\in\{0,1,\dots,n-1\}$, and $\epsilon\in (0,1)$, there exists an $(n, d, \epsilon)$-evasive $k$-subspace family (resp.~affine $k$-subspace family) $\mathcal{H}$ on $\proj^n$ (resp.~$\aff^n$) of size $\poly(N(k,d,n), n, 1/\epsilon)$, 
which is bounded by $\poly(n^{\min\{k+1, n-k, d\}d}, d, 1/\epsilon)$.  
Moreover:
\begin{itemize}
\item If $\mathrm{char}(\F)=0$, where $\F$ denotes the base field, then the linear equations defining the projective or affine subspaces in $\mathcal{H}$ are defined over $\Q$.
Moreover,  the bit-lengths of the numerators and denominators of the coefficients of these linear equations are polynomial in $|\mathcal{H}|$, and the total bit complexity of computing these linear equations is polynomial in $|\mathcal{H}|$.
\item If $\mathrm{char}(\F)=p>0$,
 then the linear equations defining the projective or affine subspaces in $\mathcal{H}$ are defined over an extension field $\F_q$ of $\F_p$, where $q\leq \poly(|\mathcal{H}|, p)$.
The total bit complexity of computing these linear equations and constructing the field $\F_q$ is polynomial in $|\mathcal{H}|$ and $\log p$.
\end{itemize}
In particular,   when $d$ is bounded, the bit complexity of constructing $\mathcal{H}$ is polynomial in $n/\eps$ (and $\log p$ if $\mathrm{char}(\F)=p>0$).
\end{theorem}

\begin{rem}
The two items in the above theorem bound the complexity of the coefficients that define $\mathcal{H}$. The same bounds apply to the coefficients in all constructions presented in this paper, and in particular, to those in \cref{thm_dnoetherproj} and \cref{thm_dnoetheraff} below. These bounds are needed for bounding the bit complexity of the construction of $\mathcal{H}$, which is crucial for demonstrating the explicitness of $\mathcal{H}$. We also remark that if we do not impose any restrictions on the complexity of the coefficients, then it is, in fact, straightforward to construct hitting sets of polynomial size unconditionally \cite[Lemma~4.2]{HS80}. This explains why we consider bit complexity as the complexity measure rather than assuming that each field operation takes unit cost, which is common in arithmetic complexity.
\end{rem}

\begin{rem}
A previous version of this paper \cite{Guo21} proved a weaker upper bound where $n'$ in the definition of $N(k,d,n)$ is replaced by $n$.
Our new bound in \cref{thm_main} has the advantage that when $n-k$ is small, we can get a subspace family of size $\poly(n,1/\epsilon)$ even if $d$ grows slowly in $n$:
\begin{itemize}
\item  As $N(k,d,n)\leq  \binom{(k'+1)(n'+1+d)}{(k'+1)d}\leq \binom{(d-1)(n-k+2d-1)}{(d-1)d}$, we can afford any $d\leq f(n)$ for some $f(n)=\omega_n(1)$ when $n-k=n^{o(1)}$.
\item Similarly, by the bound $N(k,d,n)\leq \binom{(n-k)(n'+1+d)}{(n-k)d}  \leq \binom{(n-k)(n-k+2d-1)}{(n-k)d}$, we can afford any $d=O(\log n)$ when $n-k=O(1)$.
\end{itemize}

\end{rem}


\paragraph{Lower bound.}
As a complement of the above result, we establish the following \emph{tight} lower bound for  projective   $k$-subspace families. It implies that when $n-k=n^{\Omega(1)}$, the assumption of $d$ being bounded is necessary for a projective $(n, d)$-evasive $k$-subspace family to have polynomial size.

\begin{restatable}{theorem}{lb}\label{thm_lb}
Let  $n,d\in\N^+$ and $k\in\{0,1,\dots,n-1\}$. 
Let $\mathcal{F}$ be the family of equidimensional projective subvarieties of $\proj^n$ of dimension $n-k-1$ and degree at most $d$.
Suppose $\mathcal{H}$ is an $\mathcal{F}$-evasive $k$-subspace family on  $\proj^n$. 
Then 
\[
|\mathcal{H}|\geq \begin{cases} (n-k)(k+1)+1 & \text{if } d=1, \\
 \max\left\{d(n-k)(k+1)+1, \binom{d+n-k}{d}+(n-k+1)k   \right\} & \text{if } d>1.
\end{cases}
\]
In particular, $|\mathcal{H}|$ is superpolynomial in $n$ when $n-k=n^{\Omega(1)}$ and $d=\omega(1)$.
\end{restatable}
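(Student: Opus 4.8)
The plan is to derive each of the claimed bounds from an explicit family of "bad" configurations that a small $\mathcal{H}$ cannot handle, exploiting the fact that a $k$-subspace $W$ evades a component $\mathcal{V}$ of dimension $n-k-1$ exactly when $W\cap\mathcal{V}$ is finite (since $k+(n-k-1)-n=-1$, so "expected dimension" here means empty intersection — wait, more precisely $\dim(W\cap\mathcal{V})\le -1$, i.e. $W\cap\mathcal{V}=\emptyset$). So $\mathcal{H}$ being $\mathcal{F}$-evasive means: for every equidimensional $\mathcal{V}$ of dimension $n-k-1$ and degree $\le d$, some $W\in\mathcal{H}$ misses $\mathcal{V}$ entirely. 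Equivalently, the complement statement is what we attack: if $|\mathcal{H}|$ is small, there is a single degree-$\le d$ equidimensional variety of dimension $n-k-1$ meeting every $W\in\mathcal{H}$.

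First I would handle the case $d=1$, where $\mathcal{F}$ consists of $(n-k-1)$-dimensional linear projective subspaces. Given $\mathcal{H}=\{W_1,\dots,W_m\}$ with $m\le (n-k)(k+1)$, I want to produce an $(n-k-1)$-plane $L$ meeting every $W_i$. The space of $(n-k-1)$-planes in $\proj^n$ is the Grassmannian of dimension $(n-k)(k+1)$, and for a fixed $k$-plane $W_i$, the sub-locus of $(n-k-1)$-planes meeting $W_i$ is a Schubert variety which is a hypersurface in this Grassmannian (it has codimension $n-k-(n-k)=$ ... let me recompute: a generic $(n-k-1)$-plane and a $k$-plane in $\proj^n$ have $\dim$ sum $(n-k-1)+k = n-1 < n$, so generically disjoint, and the condition of meeting is codimension $1$). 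Since each condition cuts codimension $\le 1$ and we have only $m\le (n-k)(k+1)=\dim\mathrm{Gr}$ conditions, the intersection of these Schubert hypersurfaces is nonempty by dimension counting — giving the desired $L$. This forces $|\mathcal{H}|\ge (n-k)(k+1)+1$.

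For $d>1$ I would prove the two lower bounds separately. For the bound $d(n-k)(k+1)+1$: take a union of $d$ generic $(n-k-1)$-planes (degree exactly $d$, equidimensional); it meets a fixed $k$-plane $W$ iff at least one of the $d$ planes does, so in the $d$-fold product of Grassmannians the "meets $W$" locus has codimension $\le 1$ as before. Wait — it's the $d$-fold product, of dimension $d(n-k)(k+1)$, and we impose $m$ codimension-$\le 1$ conditions; if $m\le d(n-k)(k+1)$ the intersection is nonempty, yielding a bad variety. For the bound $\binom{d+n-k}{d}+(n-k+1)k$: here I would use a different construction — vary a degree-$d$ hypersurface inside a fixed generic $(n-k)$-plane $\Pi\cong\proj^{n-k}$, so the variety has dimension $n-k-1$ and degree $d$. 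The space of such hypersurfaces is a projective space of dimension $\binom{d+n-k}{d}-1$. A fixed $k$-plane $W_i$ meets $\Pi$ in a single point (generically, since $k+(n-k)=n$), so "$\mathcal{V}$ meets $W_i$" becomes "this point lies on the hypersurface", a single linear condition on the coefficients — codimension $\le 1$. But I must also allow $\Pi$ to vary to absorb the extra $(n-k+1)k$ term: the choice of $\Pi$ is a point in $\mathrm{Gr}(n-k,n)$ of dimension $(n-k+1)\cdot(n-(n-k)) = (n-k+1)k$, and I'd argue (perhaps more carefully, by first fixing $\Pi$ generically and counting, then noting the extra freedom) that the total parameter space has dimension $\binom{d+n-k}{d}-1+(n-k+1)k$ minus $m$ codimension-$\le1$ conditions is nonempty when $m\le\binom{d+n-k}{d}+(n-k+1)k-1$; hmm, I need to be slightly careful that $\Pi$-freedom genuinely helps, i.e. that the conditions are not automatically satisfied — but since $W_i$ is a $k$-plane and generic $\Pi$ meets it in one point, and the hypersurface can be chosen to avoid finitely many prescribed points, the count goes through, giving $|\mathcal{H}|\ge\binom{d+n-k}{d}+(n-k+1)k$. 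Taking the max of the two bounds gives the $d>1$ case.

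The main obstacle will be making the dimension-counting rigorous: I need that imposing "$\mathcal{V}$ meets $W_i$" really does cut codimension \emph{at most} one in the relevant parameter space (easy, it's a closed condition defined by not-identically-zero equations in the generic case) and, crucially, that the parameter spaces I use (products of Grassmannians, or the bundle of degree-$d$ hypersurfaces over $\mathrm{Gr}(n-k,n)$) are irreducible and of the stated dimension, so that $m$ hypersurface-type conditions leave a nonempty intersection whenever $m$ is at most the dimension. This is standard projective-variety intersection theory (an intersection of $\le\dim X$ hypersurfaces in an irreducible projective variety $X$ is nonempty), and I'd cite it from the algebraic-geometry preliminaries; the delicate point is checking that each condition is cut out by a \emph{nonzero} section in the appropriate line bundle — equivalently, that for each $i$ there genuinely exists a member of $\mathcal{F}$ missing $W_i$ — which follows since a generic member of each family evades any fixed $k$-plane. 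The very last sentence, that $|\mathcal{H}|$ is superpolynomial when $n-k=\Omega(n)$ and $d=\omega(1)$, is then immediate from the $\binom{d+n-k}{d}$ term since $\binom{d+n-k}{d}\ge\binom{d+\Omega(n)}{d}\ge 2^{\Omega(d)}$ for $d=\omega(1)$ (and is genuinely superpolynomial in $n$, e.g. $\ge (n/d)^d = n^{\omega(1)}$ when also $d=n^{o(1)}$, with a separate easy argument when $d$ is itself polynomially large).
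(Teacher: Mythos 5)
Your approach follows the paper's closely: lower-bound $|\mathcal{H}|$ by the dimension of a parameter space $X$ of degree-$\leq d$ equidimensional subvarieties of dimension $n-k-1$, using that an intersection of at most $\dim X$ ``hypersurface'' conditions on an irreducible projective variety is nonempty. Your two families --- $d$-tuples of $(n-k-1)$-planes, and degree-$d$ hypersurfaces inside a varying $(n-k)$-plane --- are exactly the ones the paper points to (in the remark after Theorem~\ref{thm_chowdim}) as realizing the two terms in the Azcue--Lehmann formula for $\dim C(n-k-1,d,n)$. The paper's proof simply packages this via the Chow variety: embed $C(n-k-1,d,n)$ in $\proj(R/I)_d$ via the Chow form, observe that $\{D : D \text{ meets } W\}$ is a \emph{hyperplane} section there, and plug in the known dimension.

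There is, however, a genuine gap at the ``delicate point'' you flag. You claim that being cut out by a nonzero section of the appropriate line bundle is \emph{equivalent} to the existence of a member of the family missing $W_i$. That equivalence is false, and the weaker statement it would reduce to is not enough to force a nonempty intersection: on $\proj^1\times\proj^1$ the divisors $\{p\}\times\proj^1$ and $\{q\}\times\proj^1$ are each proper, closed, codimension one, and cut out by a section of a line bundle --- yet disjoint. What you actually need is that each condition $Z_i=\{x\in X: \mathcal{V}_x\cap W_i\neq\emptyset\}$ is the pullback of a hyperplane along some morphism $X\to\proj^N$ that is generically finite onto its image, so that the projective dimension theorem applies to the image. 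For your first two families this is classical (Pl\"ucker for the Grassmannian, Segre for its $d$-fold power, where the ``meets $W_i$'' locus is the vanishing of a product of one Pl\"ucker-linear form per factor), so the $d=1$ and $d(n-k)(k+1)+1$ bounds go through. For the bundle of degree-$d$ hypersurfaces over the Grassmannian of $(n-k)$-planes, the morphism you need is precisely the Chow embedding $(\Pi,\mathcal{V})\mapsto[\mathcal{R}_{\mathcal{V}}]$: over the Schubert locus where $\Pi$ meets $W_i$ in positive dimension the \emph{entire} fiber lies in $Z_i$, which is exactly why $Z_i$ is not a hyperplane section of any naive projective embedding of the bundle, and why establishing the needed positivity without the Chow form is not actually a shortcut. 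You should either invoke the Chow form for this family (essentially re-deriving the relevant piece of the paper's argument), or cite the Chow-variety dimension result and do the intersection inside $\proj(R/I)_d$ directly, as the paper does.
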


\begin{rem}[Tightness of the lower bound]
When $d=1$, the lower bound $|\mathcal{H}|\geq (n-k)(k+1)+1$ in \cref{thm_lb} is achieved by known explicit lossless rank condensers \cite{FS12, FSS14, For14} (see \cref{sec_condenser}). For general $d$, the lower bound  in \cref{thm_lb}  is also tight and matched by \emph{non-explicit} constructions. See \cref{sec_lb} for a discussion.

In general, there is a gap between known upper bounds from explicit constructions and the tight lower bound. In particular, when $d\leq (n-k)^{1-\delta}$ for some constant $\delta>0$, our lower bound gives $|\mathcal{H}|\geq (n-k)^{\Omega(d)}+\poly(n)$ while the upper bound in \cref{thm_main} gives $|\mathcal{H}|\leq (n-k)^{O(d\min\{k,d\})}+\poly(n)$.
\end{rem}

Next, we list two applications of our Main Theorem (\cref{thm_main}): derandomizing Noether's normalization lemma for varieties of low degree, and polynomial identity testing for a special family of depth-4 arithmetic circuits.

\subsubsection{Derandomizing Noether's Normalization Lemma}

\emph{Noether's normalization lemma}, introduced by Noether \cite{Noe26}, is an important result in commutative algebra and algebraic geometry with many applications. For example, it is used in the development of dimension theory and can be used to prove Grothendieck's generic freeness lemma \cite{Eis13}. It also has applications in computational algebraic geometry, e.g., computing the dimension of a projective variety \cite{GH93, GHLMS00}.

The usual geometric formulation of Noether's normalization lemma states that for any affine variety
$\mathcal{V}\subseteq \aff^n$ of dimension $r$, there exists a surjective \emph{finite morphism} $\pi: \mathcal{V}\to \aff^r$.
(See \cref{sec_ag} for the definition of finite morphisms.) Moreover, $\pi$ may be chosen to be the restriction of a linear map $\aff^n\to\aff^r$.\footnote{For simplicity, we assume the base field is algebraically closed and hence infinite. But the lemma and our derandomization are valid as long as the field is large enough, depending on the variety $\mathcal{V}$. We remark that Nagata \cite{Nag62} proved a version of the normalization lemma that is deterministic and does not require the base field to be sufficiently large, but the morphism he used is highly nonlinear.
Moreover, Nagata's normalization lemma crucially relies on the fact that the variety is affine, while the normalization lemma we consider here extends to the projective case.
} 
 There is also a related projective or graded version of the lemma, which states that for any projective variety  $\mathcal{V}$ of dimension $r$, there exists a surjective finite morphism $\pi: \mathcal{V}\to \proj^r$. A special form of this lemma goes back to Hilbert \cite{Hil90}.  

In these versions of Noether's normalization lemma, it can be shown that with high probability, a random linear map  yields a valid finite morphism $\pi$, where ``random'' means the coefficients of the linear map are chosen randomly from a sufficiently large finite set $S\subseteq\F$. It is thus a natural question to derandomize the lemma.

Mulmuley \cite{Mul17} studied a form of  Noether's normalization lemma and proved that derandomizing it is equivalent to a strengthened form of the black-box derandomization of PIT. 
There, the ambient projective space has exponential dimension and the problem is constructing a finite morphism $\pi: \mathcal{V}\to \proj^k$ with a \emph{succinct} specification in deterministic polynomial time, where $k=\poly(\dim(\mathcal{V}))$ and $\mathcal{V}$ is an \emph{explicit variety}  \cite{Mul17}.  This problem was later shown to be in PSPACE \cite{FS17, GSS18}. The special case for the ring of matrix invariants under simultaneous conjugation was solved in quasipolynomial time by Forbes and Shpilka \cite{FS13}.  

We consider Noether's normalization lemma in its original context and completely derandomize it for projective/affine varieties of bounded degree. The following two theorems summarize our results.

\begin{restatable}{theorem}{dnoetherproj}\label{thm_dnoetherproj}
Let $n, d\in\N^+$, $r\in \{0,1,\dots, n\}$, $k=n-r-1$, and $\epsilon\in (0,1)$.
There exists an explicit collection $\mathcal{L}$ of linear maps $\aff^{n+1}\to\aff^{r+1}$ of size $\poly(N(k,d,n), n, 1/\epsilon)$
such that for every subvariety $\mathcal{V}\subseteq\proj^n$ of dimension $r$ and degree at most $d$,
all but at most $\epsilon$-fraction of $\pi\in\mathcal{L}$ induce a surjective finite morphism from $\mathcal{V}$ to $\proj^{r}$.\footnote{Let $N(k,d,n)=1$ when $r=n$ (i.e., $k=-1$). Similarly, in \cref{thm_dnoetheraff}, let $N(k,d,n-1)=1$ when $r=n$ or $r=0$ (i.e., $k=-1$ or $k=n-1$).}
Moreover, $\mathcal{L}$ can be computed in time polynomial in $|\mathcal{L}|$ (and $\log p$, if $\mathrm{char}(\F)=p>0$).
\end{restatable}

\begin{restatable}{theorem}{dnoetheraff}\label{thm_dnoetheraff}
Let $n, d\in\N^+$, $r\in \{0,1,\dots, n\}$, $k=n-r-1$, and $\epsilon\in (0,1)$.
There exists an explicit collection $\mathcal{L}$ of linear maps $\aff^{n}\to\aff^{r}$ of size $\poly(N(k,d,n-1), n, 1/\epsilon)$
such that for every subvariety $\mathcal{V}\subseteq\aff^n$ of dimension $r$ and degree at most $d$,
all but at most $\epsilon$-fraction of $\pi\in\mathcal{L}$ restrict to a surjective finite morphism from $\mathcal{V}$ to $\aff^{r}$.
Moreover, $\mathcal{L}$ can be computed in time polynomial in $|\mathcal{L}|$ (and $\log p$, if $\mathrm{char}(\F)=p>0$).
\end{restatable}

%

\paragraph{Dimension-preserving morphisms vs. finite morphisms.} 
Our construction of finite linear morphisms preserves the dimension of a variety of low degree while reducing the dimension of the ambient space.
This generalizes the property of lossless rank condensers.
However, for the dimension-preserving property, better constructions are known.
For example, it can be shown that most of the linear maps $\aff^n\to \aff^t$  from a lossless rank condenser $\mathcal{H}\subseteq \F^{t\times n}$ already preserve the dimension of a variety $\mathcal{V}\subseteq \aff^n$. 
The intuition here is that $\mathcal{V}$ can be locally approximated at a nonsingular point $p\in\mathcal{V}$ by its \emph{tangent space} at $p$. (Note that such a nonsingular point always exists when $\mathcal{V}$ is a nonempty variety.) So any linear map that preserves the dimension of this tangent space also preserves the dimension of $\mathcal{V}$. 

On the other hand, the morphisms we construct are \emph{finite morphisms}, which are strictly stronger than morphisms that are dimension-preserving.
In particular, a finite morphism $\pi$  always maps a closed set onto a closed set in the Zariski topology. Moreover, the preimage $\pi^{-1}(p)$ of \emph{every} point $p$ in the image of $\pi$ is a finite set.
 Neither of these two properties is necessarily satisfied by morphisms that are only dimension-preserving.
 
These properties of finite morphisms may be useful in extractor theory or other areas.
For example, in \cref{thm_dnoetheraff}, the cardinality of $\pi^{-1}(p)$ is bounded by the degree of $\mathcal{V}$ for every $p\in\pi(\mathcal{V})$, which translates into a lower bound for the min-entropy of the output of $\pi$ when the input random source is distributed over the variety $\mathcal{V}$.

\subsubsection{Depth-4 Polynomial Identity Testing}

Depth-4 arithmetic circuits, also known as $\Sigma\Pi\Sigma\Pi$ circuits, play a very important role in polynomial identity testing. In a surprising result, Agrawal and Vinay  \cite{AV08} proved that a complete derandomization of black-box PIT for depth-4 circuits implies an $n^{O(\log n)}$-time derandomization of PIT for general circuits of $\poly(n)$ degree.

Dvir and Shpilka  \cite{DS07} initialized the approach of applying \emph{Sylvester--Gallai} type theorems in geometry to PIT for depth-3 ($\Sigma\Pi\Sigma$) circuits.
Extending this approach, Gupta \cite{Gup14} formulated a  conjecture of Sylvester--Gallai type and proved that  his conjecture implies a complete derandomization of black-box PIT for  depth-4 circuits with bounded top fan-in and bottom fan-in (also called $\Sigma\Pi\Sigma\Pi(k, r)$ circuits, where $k, r=O(1)$). 
Peleg and Shpilka \cite{Shp19, PS20, PS20_2} proved that this conjecture holds for $k=3$ and $r=2$, and used it to give a polynomial-time black-box PIT algorithm for $\Sigma\Pi\Sigma\Pi(3, 2)$ circuits.
Using a different approach, Dutta, Dwivedi, and Saxena \cite{DDS21} gave a quasipolynomial-time black-box PIT algorithm for $\Sigma\Pi\Sigma\Pi(k, r)$ circuits.
Finally, in an exciting breakthrough, Limaye, Srinivasan, and Tavenas \cite{LST21} obtained a subexpoential-time black-box PIT algorithm for all arithmetic circuits of bounded depth.

In \cite{Gup14}, Gupta divided $\Sigma\Pi\Sigma\Pi(k, r)$ into two families: those in a certain Sylvester--Gallai configuration and those that are not. His conjecture states that the circuits in the first family always have bounded \emph{transcendence degree}, depending only on $k$ and $r$. If the conjecture is true, then the results in \cite{BMS13, ASSS16} imply a complete derandomization of the black-box PIT for this family.
 For the second family of circuits, which we call \emph{non-SG} circuits, he proved that the black-box PIT can also be derandomized completely.
 
 \begin{theorem}[{\cite{Gup14}}]\label{thm_gupta}
There exists a deterministic black-box PIT algorithm with time complexity $(dnk)^{\poly(r^{k^2}+k)}$  for non-SG $\Sigma\Pi\Sigma\Pi(k, r)$ circuits of degree at most $d$ in $X_1,\dots,X_n$ over $\C$.
In particular, the algorithm runs in polynomial time when $k$ and $r$ are bounded.
\end{theorem}

Gupta's proof of \cref{thm_gupta} is quite complex and used tools from computational algebraic geometry, including an effective version of Bertini irreducibility theorem \cite{HS81} and radical  membership testing (which in turn depends on \emph{effective Nullstellensatz} \cite{Kol88, Dub93}).

We observe that what is needed here is simply an explicit construction of subspaces intersecting certain varieties with (at most) the expected dimension.
Plugging in our explicit construction of variety evasive subspace families, we obtain an improved black-box PIT algorithm with a simple proof.

\begin{restatable}{theorem}{pit}\label{thm_pit}
There exists a deterministic black-box PIT algorithm with time complexity polynomial in $d\cdot\binom{k(n+1+r^k)}{kr^k}\cdot\binom{k-1+d}{k-1}\leq \poly(d^{k}, n^{r^k}, r^{k^2r^k})$ (and $\log p$, if $\mathrm{char}(\F)=p>0$)  for non-SG  $\Sigma\Pi\Sigma\Pi(k, r)$ circuits of degree at most $d$ in $X_1,\dots,X_n$ over an algebraically closed field $\F$.
\end{restatable}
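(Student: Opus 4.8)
Proof proposal for Theorem~\ref{thm_pit}.

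\medskip

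The plan is to distill the structural analysis of non-SG circuits into a purely geometric statement about proper intersections, and then apply the explicit evasive subspace family of Theorem~\ref{thm_main}. Write the input as $C=\sum_{i=1}^{k}T_i$ with $T_i=\prod_l Q_{il}$, $\deg Q_{il}\le r$, and perform the standard normalization (drop zero terms, divide out $\gcd(T_1,\dots,T_k)$, and pass to a minimal nonzero subsum), none of which changes whether $C\equiv 0$ nor worsens the parameters $k,r,d,n$. The crux is a structural lemma, essentially a reformulation of Gupta's treatment of the non-SG case in \cite{Gup14}: if $C$ is a non-SG $\Sigma\Pi\Sigma\Pi(k,r)$ circuit with $C\not\equiv 0$, then there is a subvariety $\mathcal{V}\subseteq\aff^n$ that is the common zero set of at most $k$ of the forms $Q_{il}$, hence of degree at most $r^{k}$ by the Bézout inequality, with the property that $C|_W\not\equiv 0$ for \emph{every} affine $(k-1)$-subspace $W$ that evades $\mathcal{V}$ in the sense of Definition~\ref{defi_evade}. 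Morally, the non-SG hypothesis is precisely what confines the ``bad'' restrictions to such a bounded-degree $\mathcal{V}$ instead of to the full hypersurface $\mathcal{V}(C)$ of degree up to $d$; this is the source of the $r^k$ (rather than $d$) dependence, and it is exactly the ``subspaces intersecting certain varieties with at most the expected dimension'' phenomenon alluded to above.

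\medskip

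Granting the lemma, the algorithm and its analysis are short. Invoke Theorem~\ref{thm_main} with ambient dimension $n$, subspace dimension $k-1$, degree parameter $r^k$, and $\epsilon=1/2$ to obtain an $(n,r^k,1/2)$-evasive — in particular $(n,r^k)$-evasive — affine $(k-1)$-subspace family $\mathcal{H}$ on $\aff^n$ of size $\poly\bigl(N(k-1,r^k,n),n\bigr)\le\poly\bigl(\binom{k(n+1+r^k)}{kr^k},n\bigr)$, computable in time polynomial in $|\mathcal{H}|$. For each $W\in\mathcal{H}$, identify $W$ with $\aff^{k-1}$ and query $C$ on a standard explicit hitting set for polynomials of degree $\le d$ in $k-1$ variables (of size $\binom{k-1+d}{k-1}$); output ``$C\not\equiv 0$'' iff some query returns a nonzero value. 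Correctness: if $C\equiv 0$ every queried value is $0$; if $C\not\equiv 0$, the lemma supplies $\mathcal{V}$ with $\deg\mathcal{V}\le r^k$, evasiveness of $\mathcal{H}$ yields some $W\in\mathcal{H}$ evading $\mathcal{V}$, hence $C|_W\not\equiv 0$, and since $C|_W$ has degree $\le d$ in $k-1$ variables it is nonzero on the hitting set, which is detected. The running time is dominated by constructing $\mathcal{H}$, performing $|\mathcal{H}|\cdot\binom{k-1+d}{k-1}$ evaluations of $C$, and the associated field arithmetic, which is polynomial in $d\cdot\binom{k(n+1+r^k)}{kr^k}\cdot\binom{k-1+d}{k-1}$ (and in $\log p$ when $\mathrm{char}(\F)=p>0$); crudely bounding the three factors yields the stated bound $\poly(d^k,n^{r^k},r^{k^2r^k})$.

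\medskip

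The real content, and the main obstacle, is establishing the structural lemma: one must carry out Gupta's non-SG analysis in this geometric language while controlling two things. First, the variety witnessing nonzeroness has to be an honest intersection of at most $k$ of the degree-$\le r$ forms, so that Bézout gives the clean bound $\deg\mathcal{V}\le r^{k}$; if the argument instead needed to union one such variety per choice of factors, a factor of $\prod_i(\deg T_i)\le d^{\,k}$ would creep in and spoil the $d$-free dependence. Second, the normalization must not reintroduce $d$-dependence: one should check that a $(k-1)$-plane evading $\mathcal{V}$ is automatically not contained in the gcd hypersurface and that restriction keeps the subsum reduced and minimal, so that the lemma genuinely applies to $C|_W$. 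One should also verify that ``evades $\mathcal{V}$'' is exactly the right genericity notion — for components of $\mathcal{V}$ of codimension $k$ it asserts $W\cap\mathcal{V}=\emptyset$, and for components of smaller codimension it asserts the intersection with $W$ has the expected (hence bounded) dimension, which is what lets the zero-preservation step survive the restriction. Everything else is bounded-complexity commutative algebra, and no derandomization input beyond Theorem~\ref{thm_main} is required.
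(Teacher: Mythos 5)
Your algorithmic skeleton matches the paper's: construct an evasive affine $(k-1)$-subspace family using the main construction, restrict the circuit to each subspace, and hit each restriction with an explicit hitting set for $(k-1)$-variate degree-$d$ polynomials. But the structural lemma you propose is not true as stated, and the slack you set aside in $\epsilon$ is not enough to repair it.

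You posit a \emph{single} variety $\mathcal{V}$, cut out by at most $k$ of the forms $Q_{il}$ and hence of degree at most $r^k$, such that every $(k-1)$-subspace evading $\mathcal{V}$ yields $C|_W\not\equiv 0$. The paper's actual argument, after fixing an irreducible component $\mathcal{V}_0$ of $\bigcap_{i<k'}\mathcal{V}(Q_{i,j_i})$ not contained in $\mathcal{V}(F_{k'})$, must have $W$ evade \emph{each} of the varieties $\mathcal{V}_0\cap\mathcal{V}(Q_{k',j})$ for $j=1,\dots,d_{k'}$; the contradiction step crucially compares $\dim(W\cap\mathcal{V}_0)$ (which Lemma~\ref{lem_dimcap} forces to be at least its expected value) against $\dim\bigl(\bigcup_j W\cap\mathcal{V}_0\cap\mathcal{V}(Q_{k',j})\bigr)$, and one uncontrolled $j$ kills the bound. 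Similarly, the gcd factor of $\widetilde{C}$ is a product of up to $d$ forms of degree $\le r$, and you need $W$ to stay out of each of those hypersurfaces. In total one needs to evade $O(d)$ varieties of degree at most $r^k$. Packaging them into a single variety would give degree up to $r^{k-1}d$ (or $rd$ for the gcd), which an $(n,r^k,\epsilon)$-evasive family does not control, and it is not the zero set of $\le k$ forms of degree $\le r$. The only clean fix is exactly what the paper does: choose $\epsilon=1/(4d)$ and apply a union bound over the $O(d)$ bounded-degree varieties. Your choice $\epsilon=1/2$ therefore does not work; the extra $\poly(d)$ factor in $|\mathcal{H}|$ is where the leading $d$ in the stated time bound comes from. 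You do flag the danger of ``one variety per choice of factors,'' but the actual blow-up is a factor of $O(d)$ (not $d^k$), and it is not avoidable — it has to be absorbed into $\epsilon$. A second, smaller imprecision: dividing out $\gcd(T_1,\dots,T_k)$ is not an operation available to a black-box algorithm; one must instead argue, as the paper does in its first claim, that most $W$ are not contained in any irreducible factor of the gcd, which again uses the $O(1/d)$ error budget.
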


In particular, \cref{thm_pit} improves the exponent of $n$ in the time complexity from $\poly(r^{k^2}+k)$ to $O(r^k)$, and the exponent of $d$ from $\poly(r^{k^2}+k)$ to $O(k)$. Moreover, our proof is more direct and conceptually simpler than the proof in \cite{Gup14}.

\begin{rem}
In \cite{Muk16}, Mukhopadhyay gave a deterministic polynomial-time black-box PIT algorithm for $\Sigma\Pi\Sigma\Pi(k, r)$ circuits  satisfying a variant of the non-SG assumption. (Its time complexity is similar to the time complexity in \cref{thm_gupta}.)
It appears to us that his assumption in fact implies the non-SG assumption.
The main tool used there is the \emph{multivariate resultant}, which may be related to our approach  based on Chow forms (see \cref{sec_overview}). Indeed, it is known that a multivariate resultant is the Chow form of a Veronese variety \cite[Chapter~3, Example~2.4]{GKZ94}.
\end{rem}

\subsection{Proof Overview}\label{sec_overview}

We present an overview of our proof of \cref{thm_main} and that of \cref{thm_lb}.

\paragraph{Overview of the proof of \cref{thm_main}.} In the proof of \cref{thm_main}, we focus on constructing a $k$-subspace family on $\proj^n$. The  case of $\aff^n$ can be easily derived from it by viewing $\aff^n$ as an open subset of $\proj^n$ and restricting to this subset.

Consider a variety  $\mathcal{V}\subseteq\proj^n$ of degree at most $d$. 
We want to construct  a $k$-subspace family $\mathcal{H}$ on $\proj^n$, independent of $\mathcal{V}$, such that  all but at most $\epsilon$-fraction of $W\in\mathcal{H}$ evade $\mathcal{V}$. Our key ideas can be summarized as follows.

\subparagraph{Reducing to the equidimensional/irreducible case of dimension $n-k-1$.} As a first step, we reduce the problem to the special case that $\mathcal{V}$ is an \emph{equidimensional} (or even irreducible) variety of $\proj^n$ of dimension $n-k-1$, which means every irreducible component of $\mathcal{V}$ has dimension exactly $n-k-1$. This step is explained in \cref{sec_equi}.

\subparagraph{Hitting the Chow form of $\mathcal{V}$.}
Denote by $\pgr(k, n)$ the Grassmannian consisting of of all $k$-subspaces of $\proj^n$.
As $\codim(\mathcal{V})=n-(n-k-1)>k$, a \emph{general}  $k$-subspace $W\in \pgr(k, n)$ is disjoint from $\mathcal{V}$, but we want to find such $W$ explicitly. 

One remarkable fact in algebraic geometry is that there is a single polynomial  $\widetilde{R}_\mathcal{V}$ on the Grassmannian $\pgr(k, n)$   that defines precisely the subset of $k$-subspaces that intersect $\mathcal{V}$. This polynomial  $\widetilde{R}_\mathcal{V}$  is called the \emph{Chow form} of $\mathcal{V}$ (in \emph{Stiefel coordinates}).
Chow forms are also known as \emph{associated forms}, \emph{Cayley forms}, or \emph{Cayley--van der Waerden--Chow forms} in literature. They were introduced by Cayley \cite{Cay60} to represent curves in $\proj^3$ and later generalized by Chow and van der Waerden \cite{CvdW37}. See \cite{DS95} for an  introduction to Chow forms and \cite{GKZ94} for an exposition in the context of elimination theory.

To be more specific, for a $k$-subspace $W\in \pgr(k, n)$, we choose a $(k+1)\times (n+1)$ matrix $A$ that represents $W$. The Chow form $\widetilde{R}_\mathcal{V}$ is a polynomial of degree $(k+1)\deg(\mathcal{V})$ in $(k+1)  (n+1)$ variables    with the following property:  $\widetilde{R}_\mathcal{V}$ vanishes at  the matrix $A$ (viewed as a list of $(k+1)  (n+1)$ coordinates) if and only if
$\mathcal{V}\cap W\neq\emptyset$.
Thus, $\widetilde{R}_\mathcal{V}$ defines precisely the subset of ``bad'' $k$-subspaces that we want to avoid.

Therefore, the problem becomes finding a collection of $(k+1)\times (n+1)$ matrices of full rank that ``hit''  the polynomial $\widetilde{R}_\mathcal{V}$ of degree  $(k+1)\deg(\mathcal{V})\leq (k+1)d$.
Using black-box PIT for low degree polynomials  (see \cref{sec_lowdeg}),  we are able to construct an $(n, d, \epsilon)$-evasive  $k$-subspace family of size polynomial in $\binom{(k+1)(n+1+d)}{(k+1)d}$ and $1/\epsilon$, which is $\poly(n, 1/\epsilon)$ when $k$ and $d$ are both bounded.
A similar ``dual'' construction yields a $k$-subspace family of size polynomial in $\binom{(n-k)(n+1+d)}{(n-k)d}$ and $1/\epsilon$, which is $\poly(n, 1/\epsilon)$ when both $n-k$ and $d$ are bounded.
For applications where $d$ is small and either $k$ or $n-k$ is small (e.g., \cref{thm_pit}), these constructions are good enough. However, when $k$ and $n-k$ are both linear in $n$, the resulting $k$-subspace families have exponential size in $n$, even if $d$ is bounded.

\subparagraph{A two-step construction.} 
To obtain a good construction for \emph{arbitrary} dimension $k$, we use a connection with Noether normalization.
It is a standard fact in algebraic geometry that a subspace $W\subseteq \proj^n$ is disjoint from a variety $\mathcal{V}$ iff it defines a \emph{projection} $\proj^n\setminus W\to \proj^{n-\dim(W)-1}$ that restricts to a finite morphism from $\mathcal{V}$ to $\proj^{n-\dim(W)-1}$.
Thus, we may reformulate our problem as finding finite morphisms $\pi: \mathcal{V}\to \proj^{n-k-1}$ that come from  projections.

We also need another fact, which states that   the codimension of an irreducible  subvariety $\mathcal{V}\subseteq \proj^n$ in $\spn(\mathcal{V})$ is at most $\deg(\mathcal{V})-1$, where $\spn(\mathcal{V})$ denotes the smallest projective subspace containing $\mathcal{V}$ (see \cref{lem_span}).
Therefore, for irreducible $\mathcal{V}$ of degree at most $d$, there exists a projective subspace $\Lambda$ of dimension (at most) $\dim(\mathcal{V})+d-1$ that contains $\mathcal{V}$. 

Our idea is to use a two-step construction. Namely, we first construct a finite morphism $\pi_1: \mathcal{V}\to \proj^{\dim (\Lambda)}$ that comes from a projection, and then construct another finite morphism $\pi_2: \mathcal{V}'\to \proj^{n-k-1}$, where $\mathcal{V}':=\pi_1(\mathcal{V})\subseteq  \proj^{\dim (\Lambda)}$. Composing $\pi_1$ with $\pi_2$ yields a desired map from $\mathcal{V}$ to $\proj^{n-k-1}$.

The first step is just the problem of constructing lossless rank condensers,  which has an optimal solution \cite{FS12, FSS14} (see \cref{sec_condenser}).  For the second step, we need to hit the Chow form of $\mathcal{V}'$. 
Thanks to the first step, the codimension of $\mathcal{V}'$ in  $\proj^{\dim(\Lambda)}$ is only
$\dim(\Lambda)-\dim(\mathcal{V}')=\dim(\Lambda)-\dim(\mathcal{V})\leq d-1$.  As the codimension is low, we may use black-box PIT for low degree polynomials just like before, and \cref{thm_main} follows.\footnote{A preliminary version of this paper \cite{Guo21} used a similar two-step construction but did not exploit the connection with Noether normalization. It is more redundant and yields a somewhat weaker  result than  \cref{thm_main}.}

 Finally,  the above connection with Noether normalization also allows us to derive \cref{thm_dnoetherproj} and \cref{thm_dnoetheraff} easily from \cref{thm_main}.

\paragraph{Overview of the proof of \cref{thm_lb}.}
Our lower bound (\cref{thm_lb}) follows from a dimension counting argument.
Let $C(r, d, n)$ be the set of  all varieties $\mathcal{V}\subseteq\proj^n$ of dimension $r:=n-k-1$ and degree $d$, which is the set of varieties that we want to evade.

Roughly speaking,  the idea is to show that (1) $C(r, d, n)$ itself can be realized as a subvariety of some projective space $\proj^N$, and (2) for every $k$-subspace $W$, the subset of $\mathcal{V}\in C(r,d,n)$ that $W$ fails to evade is the intersection of $C(r,d,n)$ with some hyperplane $H_W$ of $\proj^N$.

To see how (1) and (2) above lead to a lower bound, suppose $\mathcal{H}$ is a  $C(r, d, n)$-evasive $k$-subspace family, i.e.,  for any $\mathcal{V}\in C(r, d, n)$, there exists  $W\in\mathcal{H}$ that is disjoint from $\mathcal{V}$. Then the intersection $C(r, d, n)\cap \bigcap_{W\in\mathcal{H}} H_W$ must be empty. On the other hand, taking the intersection with each hyperplane $H_W$ reduces the dimension  of a projective variety by at most one. So we have a lower bound $|\mathcal{H}|\geq \dim(C(r, d, n))+1$.

How do we realize $C(r, d, n)$ as a subvariety of $\proj^N$? It turns out that this is a classical problem in the study of moduli spaces  and a solution was given by Cayley \cite{Cay60} and Chow--van der Waerden  \cite{CvdW37} using the \emph{Chow embedding}:
The Chow embedding $C(r, d, n)\to \proj^N$ simply sends a variety $\mathcal{V}$ to  its Chow form $\widetilde{R}_\mathcal{V}$, 
where $\widetilde{R}_\mathcal{V}$ is viewed as a point in the projective space $\proj^N$ whose homogeneous coordinates are given by the coefficients of $\widetilde{R}_\mathcal{V}$.\footnote{The actual Chow embedding we use has a slightly different form, which is essentially equivalent to the one described here.}

A technical issue here is that the image of $C(r, d, n)$ under the Chow embedding is generally not  closed in the Zariski topology. To fix this issue, the definition of $C(r, d, n)$ needs to be modified so that it contains not only  subvarieties of $\proj^n$, but also   \emph{(effective) algebraic cycles} on $\proj^n$, which are a generalization of subvarieties. 
 A theorem of Chow and van der Waerden \cite{CvdW37} then states that the Chow embedding does embed $C(r,d,n)$ in a projective subspace $\proj^N$ as a subvariety,  known as a \emph{Chow variety}.
 
Finally, we also need a lower bound for the dimension of the Chow variety $C(r, d, n)$. In fact, the exact value of $\dim(C(r,d,n))$ was determined by   Azcue \cite{Azc93} and independently by Lehmann \cite{Leh17}. Plugging in the value of $\dim(C(r,d,n))$ proves \cref{thm_lb}.

\subsection{Other Related Work}

 In \cite{DKL14}, Dvir, Koll{\'a}r,  and Lovett constructed explicit \emph{variety evasive sets}, which are large subsets of $\F_q^n$ over a finite field $\F_q$ that have small intersection with affine varieties of fixed dimension and bounded degree.
It generalizes an earlier construction of \emph{subspace evasive sets} of Dvir and Lovett \cite{DL12}.
The definition of evasiveness there is different from ours, but they are related, since a key step in the proofs of \cite{DL12, DKL14} is proving the intersection of two varieties has  dimension zero.
We also note that a  subspace/variety evasive set   is a single set, defined in a highly nonlinear way, whereas  we define a variety evasive subspace family to be a collection of projective or affine subspaces. Finally, the results in \cite{DL12, DKL14} hold only for affine subspaces/subvarieties, whereas we give our construction first in the projective setting and then derive the affine counterpart from it.

Guruswami and Xing in \cite{GX13} introduced a related notion called \emph{subspace designs}. A   subspace design is a collection $\mathcal{H}$ of large subspaces of $\F^n$ such that for any small subspace $V\subseteq\F^n$, the number of $W\in\mathcal{H}$ satisfying $\dim(W\cap V)>0$ is small  (or even the sum $\sum_{W\in\mathcal{H}}\dim(W\cap V)$ is small).
An equivalence between subspace designs and lossless rank condensers was proved in \cite{FG15}.
Explicit subspace designs were constructed by Guruswami and Kopparty \cite{GK16} and also by Guruswami, Xing, and Yuan \cite{GXY18}. 
They have applications to constructing explicit list-decodable codes with small list size \cite{GX13, GWX16, KRSW18, GR20} and explicit dimension expanders \cite{FG15, GRX21}. Subspace designs were also used to prove lower bounds in communication complexity \cite{CGS20}. 

 Jeronimo, Krick, Sabia, and Sombra \cite{JKSS04} gave a randomized algorithm, in the Blum-Shub-Smale  model over fields of characteristic zero, that computes the Chow forms of varieties defined by input polynomials. The (expected) time complexity of their algorithm is polynomial in the sizes of the arithmetic circuits encoding the input polynomials and the \emph{geometric degree}  of the polynomial system. See also the survey by Krick \cite{Kri02}.
 
 Chow varieties of effective zero-cycles and their higher secant varieties are related to lower bounds for depth-3 arithmetic circuits. They have received a considerable amount of attention in Geometric Complexity Theory  \cite{Lan12, Lan15}.

\paragraph{Organization of the paper.}

Preliminaries and notations are given in \cref{sec_pre}. We prove \cref{thm_main}, \cref{thm_dnoetherproj}, and \cref{thm_dnoetheraff} in \cref{sec_main}.  
In \cref{sec_lb}, we prove the lower bound (\cref{thm_lb}) and also give a non-explicit construction that matches this lower bound. The application to PIT for depth-4 circuits (\cref{thm_pit}) is explained in \cref{sec_app}.
Finally, we list some open problems and future directions in \cref{sec_open}.

\section{Preliminaries and Notations}\label{sec_pre}

Define $\N:=\{0,1,2\dots\}$ and $\N^+:=\{1,2,\dots\}$. Let $[n]:=\{1,2,\dots, n\}$ for $n\in\N$.
For a set $S$ and $k\in\N$, denote by $\binom{S}{k}$ the set of all subsets of $S$ of cardinality $k$.

Denote by $\F$ an algebraically closed field throughout this paper.  We use notations like $\F[X_{i,j}: i\in [n], j\in [m]]$ to denote the polynomial ring over $\F$ in a finite set of variables (in this case, in the set of variables $\{X_{i,j}: i\in [n], j\in [m]\}$).
The vector space of $n\times m$ matrices over $\F$ is denoted by $\F^{n\times m}$.

For an $n\times m$ matrix $A$  and subsets $S\subseteq [n]$, $T\subseteq [m]$, denote by $A_{S,T}$ the submatrix of $A$ whose rows and columns are selected by $S$ and $T$ respectively, where the orderings of rows and columns are preserved. 


\subsection{Black-Box PIT for Low Degree Polynomials}\label{sec_lowdeg}

For convenience, we strengthen the definition of hitting sets as follows.

\begin{definition}[$\epsilon$-hitting set]
Let $\mathcal{F}$ be a family of polynomials in $\F[X_1,\dots,X_n]$ and $\epsilon\in (0,1)$.
We say a finite collections of points $\mathcal{H}\subseteq \F^n$ is an \emph{$\epsilon$-hitting set} for $\mathcal{F}$
if for any nonzero $Q\in\mathcal{F}$, the evaluation $Q(\alpha)$ is nonzero for all but at most $\epsilon$-fraction of $\alpha\in\mathcal{H}$.
\end{definition}

We need an explicit construction of $\epsilon$-hitting sets for low degree polynomials. 
This problem has been well studied \cite{DL78, Zip79, Sch80, KS01, Bog05, Lu12, CT13, Bsh14, BP20}. For completeness,   we present a construction   based on sparse polynomial identity testing.

Recall that a polynomial is \emph{$s$-sparse} if it has at most $s$ monomials. We need the following lemma from \cite{AGKS15}.

\begin{lemma}[{\cite[Lemma~4, restated]{AGKS15}}]\label{lem_sparse}
For $n,s,d\in\N^+$ and $\epsilon_0\in (0,1)$,
there exist maps $w_1,w_2,\dots, w_N: [n]\to [N\log N]$, where $N=\poly(n, s, \log d, \epsilon_0^{-1})$,  such that for any nonzero $s$-sparse polynomial $f\in\F[X_1,\dots, X_n]$ of individual degree at most $d$, all but at most $\epsilon_0$-fraction of  $w_i$ among $w_1,w_2,\dots,w_N$  satisfies $f(Y^{w_i(1)},\dots,Y^{w_i(n)})\neq 0$.
Moreover, the bit complexity of computing $w_1,w_2,\dots,w_N$ is polynomial in $N$.
\end{lemma}

Given $n,d\in\N^+$ and $\epsilon\in (0,1)$, we construct an $\epsilon$-hitting set for $n$-variate polynomials of degree at most $d$ as follows:
\begin{enumerate}
\item Let $s=\binom{n+d}{d}$, $\epsilon_0=\epsilon/2$, and $M=\lceil \epsilon_0^{-1} dN\log N\rceil$, where $N$ is as in \cref{lem_sparse}.
\item Let $w_1,\dots,w_N$ be as in \cref{lem_sparse}, which can be computed in time $\poly(N)$. 
\item If $\mathrm{char}(\F)=0$, let $S=[M]\subseteq \Z\subseteq\F$.  If $\mathrm{char}(\F)=p>0$, choose the smallest $p$-power $q$ such that $q\geq M$, and choose $S$ to be a subset of $\F_q\subseteq \F$  of cardinality $M$.
We remark that $\F_q$ can be constructed deterministically in time $\poly(M, \log p)$. To see this, note that $q\leq Mp$ by the minimality of $q$. If $M\leq p$, then $\F_q$ is just $\F_p$. On the other hand, if $p<M\leq q$, then $\F_q$ can be constructed in time $\poly(p, [\F_q: \F_p])$ (see, e.g., \cite{Len90}), which is polynomial in $M$ since $p<M$ and $q\leq Mp$.
\item Finally, construct the following collection of points in $\F^n$ of size $MN$
\[
T=\{(\alpha^{w_i(1)},\dots,\alpha^{w_i(n)}): \alpha\in S, i\in [N]\}\subseteq \F^n.
\]
\end{enumerate}

\begin{lemma}\label{lem_lowdeg}
For any nonzero polynomial $f\in\F[X_1,\dots,X_n]$ of degree at most $d$, we have $f(u)\neq 0$ for all but at most $\epsilon$-fraction of  $u\in T$. The collection $T$ has cardinality $\poly\left(\binom{n+d}{d}, 1/\epsilon\right)$ and can be computed in time $\poly(|T|)$. 
\end{lemma}

\begin{proof}
Let $f\in\F[X_1,\dots,X_n]$ be a nonzero polynomial of degree at most $d$. Note that $f$ is trivially $s$-sparse, where $s=\binom{n+d}{d}$. So by \cref{lem_sparse}, for all but at most $\epsilon_0$-fraction of $i\in [N]$, we have $\widetilde{f}_i:=f(Y^{w_i(1)},\dots,Y^{w_i(n)})\neq 0$. Consider $i\in [N]$ such that $\widetilde{f}_i\neq 0$.
Note that $\widetilde{f}_i$ is a univariate polynomial of degree at most $d N\log N$. So it has at most $d N\log N\leq \epsilon_0 M$ zeros.
Therefore, by the choice of $M$, we have $f(\alpha^{w_i(1)},\dots,\alpha^{w_i(n)})=\widetilde{f}_i(\alpha)\neq 0$ for all but at most $\epsilon_0$-fraction of $\alpha\in S$. It follows that $f(u)\neq 0$ holds for all but at most $\epsilon$-fraction of  $u\in T$, as claimed. The rest of the lemma follows easily from the construction.
\end{proof}

Note that the seed length required to choose a random element in $T$ is $\log |T|=O(\log\binom{n+d}{d}+\log(1/\epsilon))$, which is optimal up to a constant factor. 
We have made no effort to optimize the constant hidden in $O(\cdot)$.
Interested readers may find the state-of-the-art result in \cite{BP20}, which achieves the optimal constant, at least when $d=n^{o(1)}$.

\subsection{Explicit Lossless Rank Condensers}\label{sec_condenser}

We need the following lemma  in the context of \emph{lossless rank condensers}. 
The construction in the lemma was given by Forbes and Shpilka \cite{FS12} and the lemma itself follows implicitly from the analysis of Forbes, Saptharishi, and Shpilka in \cite{FSS14}. It was also stated explicitly in \cite[Theorem~5.4.3]{For14}.

\begin{lemma}[{\cite{FSS14, For14}}]\label{lem_condenser}
Let $n\in \N^+$ and $r\in [n]$. Let $\omega\in \F^\times$ such that the multiplicative order of $\omega$ is at least $n$.
Define the $r\times n$ matrix $W=(w_{i,j})_{i\in [r], j\in [n]}$ over $\F[X]$ by
\[
w_{i,j}=(\omega^{i-1} X)^{j-1}.
\]
Then for every $n\times r$ matrix $M$ over $\F$ of rank $r$, the polynomial $\det(WM)\in\F[X]$ is  nonzero and has degree at most $r(n-r)$ after dividing out powers of $X$.
\end{lemma}

\begin{corollary}\label{cor_condenser}
Let $n, r, W$ be as in \cref{lem_condenser} and $\epsilon\in (0,1)$. Let $S\subseteq \F^\times$ be a finite set of cardinality at least $r(n-r)/\epsilon$. 
For every $n\times r$ matrix $M$ over $\F$ of rank $r$, we have $\mathrm{rank}(W(\alpha)M)=r$ for all but at most $\epsilon$-fraction of $\alpha\in S$, where $W(\alpha)$ denotes the matrix $(w_{i,j}(\alpha))_{i\in [r], j\in [n]}$ over $\F$. 
\end{corollary}

\cref{cor_condenser} states that the collection $\{W(\alpha): \alpha\in S\}$ of matrices is a (weak) \emph{$(r, \epsilon |S|)$-lossless rank condenser}, as defined in \cite{FG15}.
Note that for each $\alpha\in S$, we have  $\mathrm{rank}(W(\alpha))=r$  and hence $W(\alpha)$ corresponds to an $(r-1)$-subspace $U_{W(\alpha)}$ of $\proj^{n-1}$.
As explained in the introduction, the collection $\mathcal{H}=\{U_{W(\alpha)}: \alpha\in S\}$ is an $(\mathcal{F}, \epsilon)$-evasive $(r-1)$-subspace family on $\proj^{n-1}$, where $\mathcal{F}$ is the family of $(n-r-1)$-subspaces of $\proj^{n-1}$. Choosing $S$ of size $r(n-r)+1$ and $\epsilon=1-\frac{1}{r(n-r)+1}$ shows that the lower bound in \cref{thm_lb} is achieved when $d=1$.

\subsection{Preliminaries on Algebraic Geometry}\label{sec_ag}

We list basic preliminaries and notations on algebraic geometry used in this paper. One can also refer to a standard text, e.g., \cite{Sha13, Har13}.

\paragraph{Affine and projective spaces.}
For $n\in\N$, write $\aff^n$ for the \emph{affine $n$-space} over $\F$. It is defined to be the set $\F^n$ equipped with the {\em Zariski topology}, defined as follows: A subset $S\subseteq \aff^n$ is {\em (Zariski-)closed} if it is the set of common zeros of a set of polynomials in $\F[X_1,\dots,X_n]$. 
The complement of a closed set is an {\em open} set. The origin of an affine space is denoted by $\mathbf{0}$.

Write $\proj^n$ for the \emph{(projective) $n$-space} over $\F$, defined to be the quotient set $(\aff^{n+1}\setminus\{\mathbf{0}\})/\sim$, where $\sim$ is the equivalence relation defined by scaling, i.e., $u\sim v$ if $u=cv$ for some $c\in\F^\times$. The set $\proj^n$ is again equipped with the \emph{Zariski topology}, where a subset is closed if it is the set of common zeros of a set of \emph{homogeneous} polynomials in  $\F[X_1,\dots,X_{n+1}]$. We use $(n+1)$-tuples $(x_1,\dots,x_{n+1})$ to represent points in $\proj^n$, called \emph{homogeneous coordinates}.

For a vector space $V$ over $\F$ of dimension $n+1$, where $n\in\N$, define the projective space $\proj V=(V\setminus\{\mathbf{0}\})/\sim$, where $\sim$ is again the equivalence relation defined by scaling. By fixing a coordinate system of $V$ and identifying it with $\aff^{n+1}$, we may identify $\proj V$ with $\proj^n$.


\paragraph{Varieties.}
\emph{Varieties} in this paper refer to either projective or affine varieties.
A \emph{projective (resp. affine) variety} is simply a  closed subset of a projective (resp. affine) subspace.
 If $\mathcal{V}_1$ and $\mathcal{V}_2$ are closed subsets of a projective or affine space and $\mathcal{V}_1\subseteq \mathcal{V}_2$, we say $\mathcal{V}_1$ is a \emph{(closed) subvariety} of $\mathcal{V}_2$.

A variety is {\em reducible} if it is the union of finitely many proper subvarieties, and otherwise \emph{irreducible}.
Affine and projective spaces are irreducible.
A variety $\mathcal{V}$ can be uniquely written as the union of finitely many maximal irreducible subvarieties, which are called the \emph{irreducible components} of $\mathcal{V}$. 

A projective or affine variety is called a \emph{hypersurface} (resp. \emph{hyperplane}) if it is definable by a single polynomial (resp. single  linear polynomial). 

\paragraph{Hilbert's Nullstellensatz.}
An ideal $I$ of a commutative ring $R$ is \emph{radical} if $a^m\in I$ implies $a\in I$ for every $a\in R$ and $m\in\N^+$.
For an ideal $I$ of $\F[X_1,\dots,X_n]$, denote by $\mathcal{V}(I)$ the subvariety of $\aff^n$ defined by the polynomials in $I$. 
Define $\mathcal{V}(f_1,\dots, f_k)=\mathcal{V}(\langle f_1,\dots, f_k\rangle)$ for $f_1,\dots,f_k\in \F[X_1,\dots,X_n]$.
For a subvariety $\mathcal{V}$ of $\aff^n$, denote by $I(\mathcal{V})$ the ideal of $\F[X_1,\dots,X_n]$ consisting of all the polynomials vanishing on $\mathcal{V}$. 
\emph{Hilbert's Nullstellensatz} states that the map $\mathcal{V}\mapsto I(\mathcal{V})$ is an inclusion-reversing one-to-one correspondence between the subvarieties of $\aff^n$ and the radical ideals of $\F[X_1,\dots,X_n]$, with the inverse map $I\mapsto \mathcal{V}(I)$. 

For a subvariety $\mathcal{V}$ of $\aff^n$, define $\F[\mathcal{V}]:=\F[X_1,\dots,X_n]/I(\mathcal{V})$, called the \emph{coordinate ring} of $\mathcal{V}$.

\paragraph{Projective Nullstellensatz.}
Consider the polynomial ring $R=\F[X_1,\dots,X_{n+1}]$. It can be written as a direct sum $R=\bigoplus_{d=0}^\infty R_d$ where each $R_d$ denotes the  space of degree-$d$ homogeneous polynomials, called the \emph{homogeneous part of degree $d$} of $R$ or simply the \emph{degree-$d$ part} of $R$. For an ideal $I$ of $R$ and $d\in\N$, let $I_d:=I\cap R_d$, called the \emph{degree-$d$ part} of $I$. We say $I$ is a \emph{homogeneous ideal} if $I=\bigoplus_{d=0}^\infty I_d$.
For a homogeneous ideal $I$ of $R$, we have $R/I=\bigoplus_{d=0}^\infty (R/I)_d$ where $(R/I)_d:=R_d/I_d$.

For a homogeneous ideal $I$ of $R$, denote by $\mathcal{V}(I)$ the subvariety of $\proj^n$ defined by the homogeneous polynomials in $I$. 
Define $\mathcal{V}(f_1,\dots, f_k)=\mathcal{V}(\langle f_1,\dots, f_k\rangle)$ for homogeneous polynomials $f_1,\dots,f_k\in R$.
For a subvariety $\mathcal{V}$ of $\proj^n$, denote by $I(\mathcal{V})$ the ideal generated by the homogeneous polynomials vanishing on $\mathcal{V}$, which is a homogeneous ideal.
The \emph{projective Nullstellensatz} states that the map $\mathcal{V}\mapsto I(\mathcal{V})$ is an  inclusion-reversing one-to-one correspondence between the nonempty subvarieties of $\proj^n$ and the radical homogeneous ideals of $R$ properly contained in $\langle X_1,\dots,X_{n+1}\rangle$, with the inverse map $I\mapsto \mathcal{V}(I)$. 

For a subvariety $\mathcal{V}\subseteq\proj^n$ and the corresponding homogeneous ideal $I=I(\mathcal{V})$, we say $R/I$ is the \emph{homogeneous coordinate ring} of $\mathcal{V}$.

\paragraph{Morphisms.} Let $\mathcal{V}_1\subseteq \aff^n$ and $\mathcal{V}_2\subseteq\aff^m$ be affine varieties. A \emph{morphism}  from $\mathcal{V}_1$ to $\mathcal{V}_2$ is a map $f:\mathcal{V}_1\to \mathcal{V}_2$ that is a restriction of a polynomial map $\aff^n\to\aff^m$. Such a morphism $f$ is associated with a ring homomorphism $f^\sharp: \F[\mathcal{V}_2]\to \F[\mathcal{V}_1]$, making $\F[\mathcal{V}_1]$ an algebra over $\F[\mathcal{V}_2]$.
We say $f$ is \emph{finite}  if $\F[\mathcal{V}_1]$ is finitely generated as an $\F[\mathcal{V}_2]$-module.

Let $f: \mathcal{V}_1\to \mathcal{V}_2$ be a map between projective varieties $\mathcal{V}_1$ and $\mathcal{V}_2$. We say $f$ is a morphism from $\mathcal{V}_1$ to $\mathcal{V}_2$
if there exists a collection of open subsets $\{U_i\}_{i\in I}$ of $\mathcal{V}_2$ such that $\mathcal{V}_2=\bigcup_{i\in I} U_i$ (i.e., $\{U_i\}_{i\in I}$ is an open cover of $\mathcal{V}_2$) and for each $i\in I$, the restriction $f|_{f^{-1}(U_i)}: f^{-1}(U_i)\to U_i$ is a morphism between affine varieties. Furthermore,  if  each $f|_{f^{-1}(U_i)}$ is finite, then we say $f$ is finite. Finiteness does not depend on the choice of the affine open cover. Namely, if $f: \mathcal{V}_1\to \mathcal{V}_2$ is a finite morphism between projective varieties $\mathcal{V}_1$ and $\mathcal{V}_2$, and $U$ is an open subset of $\mathcal{V}_2$ such that  $f|_{f^{-1}(U)}: f^{-1}(U)\to U$ is a morphism between affine varieties, then $f|_{f^{-1}(U)}$ is also finite.

The image of a morphism $f:\mathcal{V}_1\to\mathcal{V}_2$ is denoted by $\mathrm{Im}(f)$ or $f(\mathcal{V}_1)$.
The image of a closed set under a finite morphism is still closed. The composition of two finite morphisms is still finite.


 

\paragraph{Dimension.}
The {\em dimension} of an irreducible variety $\mathcal{V}$, denoted by  $\dim(\mathcal{V})$, is  the largest integer $m$ such that there exists a chain of irreducible varieties $\emptyset\subsetneq \mathcal{V}_0\subsetneq  \mathcal{V}_1\subsetneq\cdots\subsetneq  \mathcal{V}_m=\mathcal{V}$. More generally, the dimension of a nonempty variety is the maximal dimension of its irreducible components. 
We define the dimension of an empty set to be $-\infty$.
A variety is \emph{equidimensional} if its irreducible components have the same dimension.

If $\pi: \mathcal{V}\to\mathcal{V}'$ is a finite morphism, then $\dim(\mathcal{V})=\dim(\pi(\mathcal{V}))$.

\paragraph{Degree.}
The {\em degree} of an irreducible subvariety $\mathcal{V}$ of $\proj^n$ (resp. $\aff^n$), denoted by $\deg(\mathcal{V})$,  is the number of intersections of $\mathcal{V}$ with a projective (resp. affine) subspace of codimension $\dim(\mathcal{V})$ in general position. 
More generally, we define the degree of a subvariety of $\proj^n$ or $\aff^n$  to be the sum of the degrees of its irreducible components.


\paragraph{Projective closure.}
The affine $n$-space $\aff^n$ may be regarded as an open subset of $\proj^n$ via the map $(x_1,\dots,x_n)\mapsto (x_1,\dots,x_n, 1)$. 
The complement $H_\infty:=\proj^n\setminus \aff^n$ is a hyperplane of $\proj^n$ defined by $X_{n+1}=0$, called the \emph{hyperplane at infinity}.
 For an affine subvariety $\mathcal{V}$ of $\aff^n\subseteq\proj^n$, the smallest projective subvariety of $\proj^n$ containing $\mathcal{V}$ is the \emph{projective closure} of $\mathcal{V}$, which we denote by $\mathcal{V}_\mathrm{cl}$.
It is known that $\mathcal{V}_\mathrm{cl}\cap \aff^n=\mathcal{V}$, $\dim(\mathcal{V}_\mathrm{cl})=\dim(\mathcal{V})$, and $\deg(\mathcal{V}_\mathrm{cl})=\deg(\mathcal{V})$. 

\paragraph{Joins of disjoint projective varieties.}

For two distinct points $p,q\in\proj^n$, denote by $\overline{pq}$ the unique projective line passing through them.
For two \emph{disjoint} projective subvarieties $\mathcal{V}_1,\mathcal{V}_2\subseteq\proj^n$, define the \emph{join} $J(\mathcal{V}_1, \mathcal{V}_2)$ of $\mathcal{V}_1$ and $\mathcal{V}_2$ as
\[
J(\mathcal{V}_1, \mathcal{V}_2):=\bigcup_{p\in\mathcal{V}_1,q\in\mathcal{V}_2} \overline{pq}.
\]

\begin{lemma}[{\cite[Examples~6.17, 11.36, and 18.17]{Har13}}]\label{lem_join}
$J(\mathcal{V}_1, \mathcal{V}_2)$ is a subvariety of $\proj^n$ of dimension $\dim(\mathcal{V}_1)+\dim(\mathcal{V}_2)+1$ and degree at most $\deg(\mathcal{V}_1)\cdot \deg(\mathcal{V}_2)$. 
\end{lemma}

We also need the following facts.

\begin{lemma}[{\cite[Exercise~11.6 and Corollary~18.5]{Har13}}]\label{lem_hyperplane}
Let $\mathcal{V}$ be a nonempty equidimensional subvariety of $\proj^n$ and $H$ a hypersurface of $\proj^n$ not containing an irreducible component of $\mathcal{V}$. Then  $\mathcal{V}\cap H$ is an equidimensional subvariety of dimension $\dim(\mathcal{V})-1$ and  degree at most $\deg(\mathcal{V})\cdot \deg(H)$ (or an empty set if $\dim(\mathcal{V})=0$).
\end{lemma}
 
 \begin{corollary}\label{cor_disjoint}
Let $\mathcal{V}$ be a subvariety of $\proj^n$ of dimension $r$, where $0\leq r<n$.  Then there exists an $(n-r-1)$-subspace $W$ disjoint from $\mathcal{V}$.
\end{corollary}
\begin{proof}
It suffices to show that there exist hyperplanes $H_1,\dots,H_{r+1}$, such that $\mathcal{V}_i:=\mathcal{V}\cap (\bigcap_{j=1}^{i} H_j)$ is empty for some $i\leq r+1$.
We may inductively choose each $H_i$ such that $C\not\subseteq H_i$ for every irreducible component $C$ of $\mathcal{V}_{i-1}$, so that $\dim(\mathcal{V}_i)\leq \dim(\mathcal{V}_{i-1})-1$ by \cref{lem_hyperplane}.
So $\mathcal{V}_i=\emptyset$ for some $i\leq r+1$.
\end{proof}

 \begin{lemma}[{\cite[Section~I.6.2, Theorem~6]{Sha13}}]\label{lem_dimcap}
Suppose $\mathcal{V}_1$ and $\mathcal{V}_2$ are subvarieties of $\proj^n$ and $\dim(\mathcal{V}_1)+\dim(\mathcal{V}_1)\geq n$. Then $\mathcal{V}_1\cap \mathcal{V}_2\neq\emptyset$ and $\dim(\mathcal{V}_1\cap\mathcal{V}_2)\geq \dim(\mathcal{V}_1)+\dim(\mathcal{V}_1)-n$.
\end{lemma}

%

\section{Proof of the Main Theorem}\label{sec_main}

In this section, we prove the Main Theorem (\cref{thm_main})  together with \cref{thm_dnoetherproj} and \cref{thm_dnoetheraff}. In \cref{sec_equi}, we show that it suffices to consider equidimensional or irreducible subvarieties of dimension $n-k-1$. \cref{sec_chow} contains an introduction to Chow forms. In \cref{sec_construction}, we present the explicit constructions and complete the proof of \cref{thm_main}.
As a product, \cref{thm_dnoetherproj} and \cref{thm_dnoetheraff} are also proved in \cref{sec_construction}.

\subsection{Reducing to the Case of Equidimensional or Irreducible Varieties}\label{sec_equi}

The following lemma states that to construct  $k$-subspace families that are evasive for  subvarieties of $\proj^n$, it suffices to consider equidimensional subvarieties of dimension $n-k-1$ (i.e., codimension $k+1$).

\begin{lemma}\label{lem_equi}
Let $n,d\in\N^+$ and $k\in\{0,1,\dots,n-1\}$. 
Let $\mathcal{F}$ be the family of all equidimensional subvarieties of $\proj^n$ of dimension $n-k-1$ and degree at most $d$.
Then an $(\mathcal{F}, \epsilon)$-evasive $k$-subspace family is also  $(n,d,\epsilon)$-evasive.
\end{lemma}

The proof of \cref{lem_equi} is based on the following claim. 

\begin{claim}\label{claim_replace}
Let  $\mathcal{V}$ be an irreducible subvariety of $\proj^n$.
There exists a subvariety $\widetilde{\mathcal{V}}\subseteq \proj^n$ of dimension $n-k-1$ and degree at most $\deg(\mathcal{V})$ such that any $k$-subspace of $\proj^n$ that evades $\widetilde{\mathcal{V}}$ also evades $\mathcal{V}$. 
\end{claim}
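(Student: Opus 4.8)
The plan is to handle two cases according to whether $\dim(\mathcal{V})$ is at least or at most $n-k-1$, reducing the dimension in the first case and raising it in the second, while never increasing the degree and always preserving the relevant evasion behaviour of $k$-subspaces.

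\medskip

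\textbf{Case 1: $\dim(\mathcal{V}) \geq n-k-1$.} Here I would repeatedly cut $\mathcal{V}$ by generic hyperplanes. By Lemma~\ref{lem_hyperplane}, if $H$ is a hyperplane not containing $\mathcal{V}$ (which exists since $\mathcal{V}$ is a proper subvariety), then $\mathcal{V}\cap H$ is equidimensional of dimension $\dim(\mathcal{V})-1$ and degree at most $\deg(\mathcal{V})\cdot 1 = \deg(\mathcal{V})$. Iterating $\dim(\mathcal{V})-(n-k-1)$ times yields an equidimensional $\widetilde{\mathcal{V}}$ of dimension exactly $n-k-1$ and degree at most $\deg(\mathcal{V})$ — but one must be slightly careful at intermediate stages since after the first cut the variety is only equidimensional, not irreducible, so I should invoke Lemma~\ref{lem_hyperplane} in the form that applies to equidimensional varieties and choose each successive hyperplane to avoid all (finitely many) irreducible components. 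For the evasion implication: suppose a $k$-subspace $W$ evades $\widetilde{\mathcal{V}}$, i.e.\ $\dim(W\cap\widetilde{\mathcal{V}})\leq k+(n-k-1)-n = -1$, so $W\cap\widetilde{\mathcal{V}}=\emptyset$. Since $\widetilde{\mathcal{V}}\subseteq\mathcal{V}$... no wait — $\widetilde{\mathcal{V}}\subseteq\mathcal{V}$ gives the wrong direction. The correct argument is: $\widetilde{\mathcal{V}}=\mathcal{V}\cap H_1\cap\cdots\cap H_m$ where $m=\dim(\mathcal{V})-(n-k-1)$; if $W\cap\mathcal{V}\neq\emptyset$ with $\dim(W\cap\mathcal{V}) > k+\dim(\mathcal{V})-n$, then cutting $W\cap\mathcal{V}$ by the same hyperplanes (choosing them generically with respect to $W\cap\mathcal{V}$ as well, which is possible since there are only finitely many components to avoid across all the relevant varieties) drops the dimension by at most one each time, so $\dim(W\cap\widetilde{\mathcal{V}})\geq \dim(W\cap\mathcal{V})-m > k+(n-k-1)-n = -1$, meaning $W$ fails to evade $\widetilde{\mathcal{V}}$. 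Contrapositively, $W$ evades $\widetilde{\mathcal{V}} \Rightarrow W$ evades $\mathcal{V}$. The subtlety is that the hyperplanes must be chosen generically enough to work simultaneously for $\mathcal{V}$ and for $W\cap\mathcal{V}$; since $W$ ranges over infinitely many subspaces this seems problematic, but it is fine because we only need: for the fixed $\widetilde{\mathcal{V}}$ produced, \emph{every} $W$ evading it evades $\mathcal{V}$ — and that follows from the inequality $\dim(W\cap\mathcal{V})\leq \dim(W\cap(\mathcal{V}\cap H_1\cap\cdots\cap H_m)) + m$, valid for \emph{any} subspace $W$ and \emph{any} hyperplanes $H_i$ because intersecting with a hypersurface drops dimension by at most one. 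So no genericity relative to $W$ is needed for the implication — only genericity relative to $\mathcal{V}$ to control the degree and dimension of $\widetilde{\mathcal{V}}$.

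\medskip

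\textbf{Case 2: $\dim(\mathcal{V}) < n-k-1$.} Here I would enlarge $\mathcal{V}$ using joins. By Lemma~\ref{lem_extend} (or directly Lemma~\ref{lem_join}), pick a subspace $\Lambda$ of dimension $(n-k-1)-\dim(\mathcal{V})-1$ disjoint from $\mathcal{V}$ and set $\widetilde{\mathcal{V}}=J(\mathcal{V},\Lambda)$; by Lemma~\ref{lem_join} this has dimension $\dim(\mathcal{V})+\dim(\Lambda)+1 = n-k-1$ and degree at most $\deg(\mathcal{V})\cdot\deg(\Lambda)=\deg(\mathcal{V})\cdot 1=\deg(\mathcal{V})$. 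For the evasion implication: $\mathcal{V}\subseteq\widetilde{\mathcal{V}}$, so $W\cap\mathcal{V}\subseteq W\cap\widetilde{\mathcal{V}}$, hence $\dim(W\cap\mathcal{V})\leq\dim(W\cap\widetilde{\mathcal{V}})$; if $W$ evades $\widetilde{\mathcal{V}}$ then $\dim(W\cap\widetilde{\mathcal{V}})\leq k+(n-k-1)-n=-1$, so $W\cap\mathcal{V}=\emptyset$ and in particular $\dim(W\cap\mathcal{V})=-\infty\leq k+\dim(\mathcal{V})-n$, i.e.\ $W$ evades $\mathcal{V}$.

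\medskip

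\textbf{Main obstacle.} The genuinely delicate point is Case~1: making sure that after the very first hyperplane cut — where $\mathcal{V}\cap H_1$ may already be reducible — each subsequent hyperplane can still be chosen to avoid containing any irreducible component, so that the degree bound $\deg(\widetilde{\mathcal{V}})\leq\deg(\mathcal{V})$ is maintained at every step (each cut multiplies the degree bound by $\deg(H_i)=1$). This is standard but requires stating Lemma~\ref{lem_hyperplane} in its equidimensional form and noting that a generic hyperplane avoids all finitely many components. I should also double-check the degenerate boundary cases: when $\dim(\mathcal{V})=n-k-1$ already, take $\widetilde{\mathcal{V}}=\mathcal{V}$; and when $n-k-1=0$ in Case~1, the last hyperplane cut produces a finite set of points rather than an empty set, which is consistent with Lemma~\ref{lem_hyperplane}. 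Everything else is a routine application of the cited lemmas.
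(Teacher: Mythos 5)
Your proof is correct and follows essentially the same route as the paper: the same three-way split (take $\widetilde{\mathcal{V}}=\mathcal{V}$ when the dimension is already $n-k-1$; join with a disjoint linear subspace when the dimension is too small, via Lemma~\ref{lem_join} and Lemma~\ref{lem_extend}; cut with generic hyperplanes when it is too large, via Lemma~\ref{lem_hyperplane}), and the same observation in the cutting case that for any fixed choice of hyperplanes, iterating Lemma~\ref{lem_dimcap} bounds $\dim(W\cap\mathcal{V})$ from $\dim(W\cap\widetilde{\mathcal{V}})$ for every $W$, so no genericity relative to $W$ is needed. Your discussion of the "main obstacle" in Case~1 is sound but slightly over-cautious: the degree/dimension control of $\widetilde{\mathcal{V}}$ only requires choosing each $H_i$ to avoid the finitely many components of the current intermediate variety, exactly as you conclude, and this is precisely what the paper invokes Lemma~\ref{lem_hyperplane} for.
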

\begin{proof}
If $\dim(\mathcal{V})=n-k-1$, then just let $\widetilde{\mathcal{V}}=\mathcal{V}$.

Now assume $\dim(\mathcal{V})<n-k-1$. 
Let $t=(n-k-1)-\dim(\mathcal{V})-1$ and let $\widetilde{\mathcal{V}}$ be the join of $\mathcal{V}$ and a $t$-subspace disjoint from $\mathcal{V}$ (which exists by \cref{cor_disjoint}).
Then  $\widetilde{\mathcal{V}}$ is a projective subvariety of dimension $n-k-1$ and degree at most $\deg(\mathcal{V})$ by \cref{lem_join}.  Suppose $W$ is a $k$-subspace that evades $\widetilde{\mathcal{V}}$.
Then $W$ is disjoint from $\widetilde{\mathcal{V}}\supseteq \mathcal{V}$.
So $W$ also evades $\mathcal{V}$.

Finally, assume $\dim(\mathcal{V})>n-k-1$. Let $t=\dim(\mathcal{V})-(n-k-1)$. 
By \cref{lem_hyperplane}, there exist $t$ hyperplanes $H_1,\dots, H_t$ of $\proj^n$ such that $\mathcal{V}\cap \bigcap_{i=1}^t H_i$ is equidimensional of dimension $n-k-1$ and degree at most $\deg(\mathcal{V})$. Let $\widetilde{\mathcal{V}}=\mathcal{V}\cap \bigcap_{i=1}^t H_i$.   Suppose $W$ is a $k$-subspace that evades $\widetilde{\mathcal{V}}$.
Then $W\cap \widetilde{\mathcal{V}}=(W\cap \mathcal{V})\cap \bigcap_{i=1}^t H_i=\emptyset$.
Again by \cref{lem_hyperplane}, we have $\dim(W\cap \mathcal{V})\leq t-1=\dim(\mathcal{V})+\dim(W)-n$.
So $W$ also evades $\mathcal{V}$.
\end{proof}

\begin{proof}[Proof of \cref{lem_equi}]
Consider a projective subvariety $\mathcal{V}\subseteq\proj^n$ of degree at most $d$.
Let $\mathcal{V}_1,\dots,\mathcal{V}_s$ be the irreducible components of $\mathcal{V}$.
For each $i\in [s]$, use \cref{claim_replace} to choose a projective subvariety  $\widetilde{\mathcal{V}}_i\subseteq\proj^n$ of  dimension $n-k-1$ and degree at most $\deg(\mathcal{V}_i)$ such that any $k$-subspace that evades $\widetilde{\mathcal{V}}_i$ also evades $\mathcal{V}_i$.
Let $\widetilde{\mathcal{V}}=\bigcup_{i=1}^s \widetilde{\mathcal{V}}_i$.  
Then  $\widetilde{\mathcal{V}}\in\mathcal{F}$. By construction, any $k$-subspace that evades $\widetilde{\mathcal{V}}$ also evades $\mathcal{V}$. It follows that an $(\mathcal{F}, \epsilon)$-evasive $k$-subspace family is also $(n,d,\epsilon)$-evasive.
\end{proof}

We further reduce to the case of irreducible varieties at the cost of blowing up the parameter $\epsilon$ by a factor of $d$. This is useful as we need irreducibility later in \cref{lem_span}.

\begin{lemma}\label{lem_irred}
Let $n,d\in\N^+$ and $k\in\{0,1,\dots,n-1\}$. 
Let $\mathcal{F}'$ be the family of all irreducible subvarieties of $\proj^n$ of dimension $n-k-1$ and degree at most $d$.
Then an $(\mathcal{F}', \epsilon)$-evasive $k$-subspace family is also an $(n,d, d\epsilon)$-evasive  $k$-subspace family.
\end{lemma}

\begin{proof}
Let $\mathcal{F}$ be as in \cref{lem_equi}. Each $\mathcal{V}\in\mathcal{F}$ has at most $d$ irreducible components, which are all in $\mathcal{F}'$ since their degrees are bounded by $d$. By definition and the union bound, if a $k$-subspace family $\mathcal{H}$ is  $(\mathcal{F}', \epsilon)$-evasive, then it is also $(\mathcal{F}, d\epsilon)$-evasive.  Combining this with \cref{lem_equi} proves the lemma. 
\end{proof}

\subsection{Chow Forms}\label{sec_chow}

By \cref{lem_equi} and \cref{lem_irred}, we only need to evade equidimensional or irreducible projective subvarieties of codimension $k+1$. 
The ``bad'' $k$-subspaces that intersect such a variety $\mathcal{V}$ form a hypersurface of the Grassmannian defined by a single form called the \emph{Chow form} of $\mathcal{V}$. We now explain the basic theory of Chow forms.

\paragraph{Grassmannians.}

Let $n\in\N$ and $k\in \{0,1,\dots, n-1\}$. The {\em Grassmannian} $\gr(k+1, n+1)$ is the set of all $(k+1)$-dimensional linear subspaces of $\aff^{n+1}$.  By taking the quotient modulo scalars, it may also be identified with the set of all  $k$-subspaces of $\proj^n$, which we denote by $\pgr(k, n)$.

\paragraph{The Pl\"ucker embedding and Pl\"ucker coordinates.}
Consider a linear subspace $W\in \gr(k+1, n+1)$. 
The simplest way of representing $W$ is using a $(k+1)\times (n+1)$ matrix $A$ over $\F$ such that $W$ equals the row space of $A$. We call such a matrix $A$ a \emph{generating matrix} of $W$. 
For convenience, we also say $A$ is a generating matrix of $\proj W\in \pgr(k,n)$.

The entries of $A$ are called the \emph{(primal) Stiefel coordinates} of $W$.
However, note that  $A$ is not uniquely determined by $W$ since for any $(k+1)\times (k+1)$ invertible matrix $M$ over $\F$, the matrix $MA$ is also a generating matrix of $W$.

Another way of representing $W$ is using the vector $(\det A_{[k+1], S})_{S\in \binom{[n+1]}{k+1}}$  of  maximal minors of a generating matrix $A$ of $W$. For a $(k+1)\times (k+1)$  invertible matrix $M$ over $\F$,  replacing $A$ by $MA$ corresponds to multiplying all the maximal minors $\det A_{[k+1], S}$ by $\det M\in\F^\times$. To remove ambiguity, we could view $(\det A_{[k+1], S})_{S\in \binom{[n+1]}{k+1}}$ as a point in the projective space $\proj^{\binom{n+1}{k+1}-1}$,  which is then uniquely determined by $W$.
This leads to the definition of the \emph{Pl\"ucker embedding}.

\begin{definition}[Pl\"ucker embedding]
Define $\phi: \gr(k+1,n+1)\to \proj^{\binom{n+1}{k+1}-1}$
by 
\[
\phi(W)=(\det A_{[k+1], S})_{S\in \binom{[n+1]}{k+1}}   
\]
where $A$ is a generating matrix of $W$.
\end{definition}

The Pl\"ucker embedding embeds the Grassmannian  $\gr(k+1,n+1)$  in $\proj^{\binom{n+1}{k+1}-1}$ as an irreducible projective subvariety, as stated by the following theorem. See, e.g., \cite{Har13, Ful97} for proofs.

\begin{theorem}
The Pl\"ucker embedding $\phi$ is a well-defined injective map whose image is an irreducible projective subvariety of $\proj^{\binom{n+1}{k+1}-1}$.
\end{theorem}
 
The homogeneous coordinates $(\det A_{[k+1], S})_{S\in \binom{[n+1]}{k+1}}$ of $\phi(W)$ are called the \emph{(primal) Pl\"ucker coordinates} of $W$. 

Denote by $R:=\F\left[X_S: S\in \binom{[n+1]}{k+1}\right]$ the homogeneous coordinate ring of $\proj^{\binom{n+1}{k+1}-1}$.
The irreducible projective subvariety $\phi(\gr(k+1,n+1))$ is defined by a homogeneous prime ideal of $R$, which we denoted by $I$. Then $R/I$ is the homogeneous coordinate ring of $\phi(\gr(k+1,n+1))$.
The ideal $I$ contains precisely the polynomial relations that the Pl\"ucker coordinates need to satisfy.
It is also known that $I$ is generated by certain quadratic forms, known as the \emph{Pl\"ucker relations}. See \cite{Har13, Ful97} for details.

\paragraph{Dual Pl\"ucker coordinates.}

Alternatively, we could  represent a linear subspace $W\in \gr(k+1, n+1)$  by  an $(n-k)\times (n+1)$ matrix $B$ over $\F$ whose rows specify the linear equations defining $W$. We call such a matrix $B$ a \emph{parity check matrix} of $W$. 
For convenience, we also say $B$ is a parity check matrix of $\proj W\in \pgr(k,n)$.

The entries of $B$ are called the \emph{dual Stiefel coordinates} of $W$.
This gives another  embedding $\phi^\vee: \gr(k+1,n+1)\to \proj^{\binom{n+1}{n-k}-1}=\proj^{\binom{n+1}{k+1}-1}$, defined by 
\[
\phi^\vee(W)=(\det B_{[n-k], S})_{S\in \binom{[n+1]}{n-k}}.
\]
The homogeneous coordinates $(\det B_{[n-k], S})_{S\in \binom{[n+1]}{n-k}}$ of $\phi^\vee(W)$ are called the \emph{dual Pl\"ucker coordinates} of $W$.\footnote{Many authors use ``primal'' and ``dual''  in the opposite way (e.g., \cite{DS95}).} 
In fact, it is known that dual Pl\"ucker coordinates are equivalent to primal  Pl\"ucker coordinates. Namely, if $W\in \gr(k+1, n+1)$ has primal Pl\"ucker coordinates $(c_S)_{S\in \binom{[n+1]}{k+1}}$, then it has dual Pl\"ucker coordinates $(c'_S)_{S\in \binom{[n+1]}{n-k}}$ with $c'_S=(-1)^{\sum_{i\in S} i-\sum_{i\in [k+1]} i} \cdot c_{[n+1]\setminus S}$ (see, e.g., \cite{JT13}). 



\paragraph{Chow forms.}
Recall that we denote by $\pgr(k, n)$ the set of all  $k$-subspaces of $\proj^n$.
By identifying $\gr(k+1, n+1)$ with $\pgr(k, n)$ via $W\mapsto \proj W$, we regard $\phi$ and $\phi^\vee$ as maps from $\pgr(k, n)$ to $\proj^{\binom{n+1}{k+1}-1}$.

 We also need the notion of \emph{associated hypersurfaces}.
\begin{definition}[Associated hypersurface \cite{GKZ94}]
For an irreducible subvariety  $\mathcal{V}\subseteq \proj^{n}$ of dimension $n-k-1$, define the \emph{associated hypersurfaces} $\mathcal{Z}_\mathcal{V}$ of $\mathcal{V}$ to be the set of $k$-subspaces intersecting $\mathcal{V}$, i.e.,
\[
 \mathcal{Z}_\mathcal{V}:=\{W\in \pgr(k, n): \mathcal{V}\cap  W\neq\emptyset\}.
\]
\end{definition}

The term ``associated hypersurface'' is justified by the following theorem.

\begin{theorem}\label{thm_chow}
Let $\mathcal{V}\subseteq \proj^{n}$ be an irreducible projective subvariety of dimension $n-k-1$ and degree $d\in\N^+$.
Then there exists a nonzero homogeneous polynomial $P_\mathcal{V}\in R=\F\left[X_S: S\in \binom{[n+1]}{k+1}\right]$ of degree $d$ such that  $\phi(\mathcal{Z}_\mathcal{V})$ is defined by $P_\mathcal{V}$ as a subvariety of $\phi(\pgr(k, n))$. That is,
\[
\phi(\mathcal{Z}_\mathcal{V})=\phi(\pgr(k, n)) \cap \mathcal{V}(P_\mathcal{V}).
\]
Moreover, $\mathcal{R}_\mathcal{V}:=P_\mathcal{V}+I\in (R/I)_d$ is uniquely determined by $\mathcal{V}$ up to scalars.
\end{theorem}
\cref{thm_chow} is explicitly stated as \cite[Theorem~1.1 and Corollary~2.1]{DS95}.
A proof can be found in \cite[Section~3.2]{GKZ94}. 
We briefly explain how to find a polynomial $P_\mathcal{V}$ satisfying \cref{thm_chow}:
Firstly, it can be shown using the trick of  dimension counting via incidence varieties that $\phi(\mathcal{Z}_\mathcal{V})$ is an irreducible projective subvariety of the Grassmannian $\phi(\pgr(k, n))$ of codimension one \cite[Section~3.2, Proposition~2.2]{GKZ94}. Secondly, the homogeneous coordinate ring $R/I$ of the Grassmannian is known to be a \emph{unique factorization domain} \cite[Chapter~9]{Ful97}. These two facts imply that the homogeneous ideal of $R/I$ defining $\phi(\mathcal{Z}_\mathcal{V})$ is a \emph{principal ideal}. Choose $\mathcal{R}_\mathcal{V}$ to be a generator of this principal ideal, which is unique up to scalars. Then lift $\mathcal{R}_\mathcal{V}\in R/I$ to $P_\mathcal{V}\in R$.

Now we are ready to define the Chow form of projective subvarieties.

\begin{definition}[Chow form]
Let $\mathcal{V}\subseteq \proj^{n}$ be an irreducible  subvariety of dimension $n-k-1$ and degree $d\in\N^+$.
Define the \emph{Chow form of $\mathcal{V}$ in  Pl\"ucker coordinates}, or simply the \emph{Chow form} of $\mathcal{V}$, to be $\mathcal{R}_\mathcal{V}\in (R/I)_d$ as in \cref{thm_chow}.

More generally, for an equidimensional  subvariety $\mathcal{V}=\bigcup_{i=1}^s \mathcal{V}_i\subseteq\proj^{n}$ of dimension $n-k-1$ and degree $d$, where $\mathcal{V}_1,\dots,\mathcal{V}_s$ are the irreducible components of $\mathcal{V}$, the \emph{Chow form} of $\mathcal{V}$ is $\mathcal{R}_\mathcal{V}:=\prod_{i=1}^s \mathcal{R}_{\mathcal{V}_i}\in (R/I)_d$. It is uniquely determined by $\mathcal{V}$ up to scalars. 
\end{definition}

As a $k$-subspace intersects $\mathcal{V}=\bigcup_{i=1}^s \mathcal{V}_i$ iff it intersects some $\mathcal{V}_i$, we see from \cref{thm_chow} that the Chow form $\mathcal{R}_{\mathcal{V}}$ of an equidimensional projective subvariety $\mathcal{V}$ of dimension $n-k-1$  vanishes precisely at the set of $k$-subspaces that intersect $\mathcal{V}$.

\begin{exmp}\label{exmp_hyp}
Let $k=0$. Let $\mathcal{V}\subseteq\proj^{n}$ be a hypersurface defined by a nonzero homogeneous polynomial $P\in \F[X_1,\dots,X_{n+1}]=R$.
The ideal $I$ of $R$ is zero in this case.
And the Chow form $\mathcal{R}_{\mathcal{V}}$ of $\mathcal{V}$  is simply $P$ (up to a scalar).
\end{exmp}

\begin{exmp}\label{exmp_gr}
Let $V\in \gr(n-k, n+1)$ and $W\in \gr(k+1, n+1)$. Choose   matrices $A, B\in \F^{(k+1)\times (n+1)}$ such that $A$ is a  generating matrix of $W$ and  $B$ is a parity check matrix of $V$.
Then $\proj V\cap \proj W\neq \emptyset$ iff $\dim (V\cap W)>0$, which holds iff
$\det(A B^T)=0$.
On the other hand, we have
\[
\det(A B^T)=\sum_{S\in \binom{[n+1]}{k+1}} \det(A_{[k+1], S}) \cdot \det((B^T)_{S, [k+1]})=\sum_{S\in \binom{[n+1]}{k+1}} \det(A_{[k+1], S}) \cdot \det(B_{[k+1], S}),
\]
where the first equation is known as the \emph{Cauchy--Binet formula} (see, e.g., \cite{FSS14}).
So $P_{\proj V}\in R_1$ is a linear polynomial whose coefficients are given by the dual Pl\"ucker coordinates $(\det B_{[k+1], S})_{S\in \binom{[n+1]}{k+1}}$ of $V$ (up to a scalar). The degree-one part $I_1$ of $I$ is zero as $I$ is generated by quadratic forms. So the Chow form $\mathcal{R}_{\proj V}\in (R/I)_1=R_1$ is simply $P_{\proj V}$.
\end{exmp}

\paragraph{Chow forms in Stiefel coordinates.}

We may also express the Chow form in Stiefel coordinates, i.e., in the entries of a generating matrix of a linear subspace.
This expression has the advantage that it is an actual polynomial rather than a member of the abstract vector space $(R/I)_d$.

Formally, let $A^*$ be a $(k+1)\times (n+1)$ variable matrix whose $(i,j)$-th entry is a variable $Y_{i,j}$.
Define the ring homomorphism
\[
\phi^\sharp: R=\F\left[X_S: S\in \binom{[n+1]}{k+1}\right] \to\F[Y_{i,j}: i\in [k+1], j\in [n+1]]
\]
that sends each variable $X_S$ to $\det(A^*_{[k+1], S})$. Define the \emph{Chow form of $\mathcal{V}$ in Stiefel coordinates} to be 
\[
\widetilde{\mathcal{R}}_\mathcal{V}:=\phi^\sharp(P_\mathcal{V}) \in \F[Y_{i,j}: i\in [k+1], j\in [n+1]]
\]
where $P_\mathcal{V}\in R_d$ is a lift of $\mathcal{R}_\mathcal{V}\in (R/I)_d$.
Note that $I$ is precisely the kernel of $\phi^\sharp$. 
So $\widetilde{\mathcal{R}}_\mathcal{V}$ is uniquely determined by $\mathcal{V}$ up to scalars.
By construction, for any $W\in \gr(k+1,n+1)$ and generating matrix $A=(a_{i,j})_{i\in [k+1], j\in [n+1]}$ of $W$, we have $P_\mathcal{V}(\phi(W))=\widetilde{\mathcal{R}}_\mathcal{V}(A):=\widetilde{\mathcal{R}}_\mathcal{V}(a_{1,1},\dots,a_{k+1,n+1})$.
So $\widetilde{\mathcal{R}}_\mathcal{V}$ vanishes at $A$ iff $\proj W\in \pgr(k, n)$ intersects $\mathcal{V}$.

\paragraph{Chow forms in dual  Stiefel coordinates.}
Similarly, we may express the Chow form in dual Stiefel coordinates, i.e., in the entries of a parity check matrix of a linear subspace. 

More specifically, choose a homogeneous polynomial $Q_\mathcal{V}\in \F\left[X_S: S\in \binom{[n+1]}{n-k}\right]$ that defines the set of $k$-subspaces intersecting $\mathcal{V}$ in terms of dual Pl\"ucker coordinates.  As primal and dual  Pl\"ucker coordinates are equivalent,   $Q_\mathcal{V}$ can be obtained from the polynomial $P_\mathcal{V}$ above by simply negating and renaming variables. Next, compose  $Q_\mathcal{V}$ with a ring homomorphism  that substitutes  dual Pl\"ucker coordinates with dual  Stiefel coordinates. The resulting polynomial, which we denote by $\widetilde{\mathcal{R}}_\mathcal{V}^\vee\in \F[Y_{i,j}: i\in [n-k], j\in [n+1]]$, is called the \emph{Chow form of $\mathcal{V}$ in dual Stiefel coordinates}.

We note that the Chow form $\widetilde{\mathcal{R}}_\mathcal{V}$ in primal Stiefel coordinates is a homogeneous polynomial of degree $(k+1)d$ in $(k+1)(n+1)$ variables, whereas the Chow form $\widetilde{\mathcal{R}}_\mathcal{V}^\vee$ in dual   Stiefel coordinates  is a homogeneous polynomial of degree $(n-k)d$ in $(n-k)(n+1)$ variables.
This suggests that it is more convenient to use the Chow form in primal (resp. dual) Stiefel coordinates when $k$ is small (resp. $n-k$ is small).\footnote{
While  both  $\mathcal{R}_\mathcal{V}$ and $\mathcal{R}_\mathcal{V}^\vee$   may be viewed as elements of $(R/I)_d$, the two (injective) maps $\mathcal{R}_\mathcal{V}\mapsto \widetilde{\mathcal{R}}_\mathcal{V}$ and $\mathcal{R}_\mathcal{V}^\vee\mapsto \widetilde{\mathcal{R}}_\mathcal{V}^\vee$ come from  different linear embedding of $(R/I)_d$ in  vector spaces of polynomials. As a result, the representation of $\mathcal{V}$ by the polynomial  $\widetilde{\mathcal{R}}_\mathcal{V}$  and the representation by $\widetilde{\mathcal{R}}_\mathcal{V}^\vee$ are not equally succinct in general.
}
 
\subsection{Explicit Constructions of Variety Evasive Subspace Families}\label{sec_construction}

Let  $n,d\in\N^+$, $k\in\{0,1,\dots,n-1\}$, and $\epsilon\in (0,1)$.  In this subsection, we prove the Main Theorem (\cref{thm_main})  by constructing explicit projective or affine $k$-subspace families that are $(n, d, \epsilon)$-evasive.

We first prove \cref{thm_main} in the projective case, and then derive the affine case from it by viewing $\aff^n$ as an open subset of $\proj^n$.
For the projective case, we present two constructions. The first one is simple and only uses $\epsilon$-hitting sets for low degree polynomials (\cref{lem_lowdeg}). But the size of the resulting subspace family is polynomial only when both $d$ and $k$ (or $n-k$) are bounded.
Next, we give a more sophisticated construction, which yields subspace families of polynomial size as long as $d$ is bounded.

\subsubsection{Simple Construction}

We first present a simple construction of $(n, d, \epsilon)$-evasive $k$-subspace families on $\proj^n$. 

First assume $k+1\leq n-k$. In this case, construct a $k$-subspace family $\mathcal{H}$ on $\proj^n$ as follows:
\begin{enumerate}
\item Use \cref{lem_lowdeg} to compute an $\epsilon$-hitting set $T$ for the family of polynomials $f\in \F[Y_{i,j}: i\in [k+1], j\in [n+1]]$ of degree at most $(k+1)d$ such that $|T|=\poly\left(\binom{(k+1)(n+1+d)}{(k+1)d}, 1/\epsilon\right)$.  
Think of $T$ as a collection of $(k+1)\times (n+1)$ matrices over $\F$. 
\item Initialize $\mathcal{H}=\emptyset$. For each matrix $A\in T$, if  $A$ has  full row rank $k+1$, add  to $\mathcal{H}$  the  $k$-subspace $W\in \pgr(k, n)$ with the generating matrix $A$.
\end{enumerate}

Next, assume $k+1>n-k$. In this case, construct $\mathcal{H}$ in a similar way, but use parity check matrices instead of generating matrices.
Namely, compute an $\epsilon$-hitting set $T$ for the family of polynomials $f\in \F[Y_{i,j}: i\in [n-k], j\in [n+1]]$ of degree at most $(n-k)d$ such that $|T|=\poly\left(\binom{(n-k)(n+1+d)}{(n-k)d}, 1/\epsilon\right)$.
Think of $T$ as a collection of $(n-k)\times (n+1)$ matrices over $\F$. 
For each matrix $A\in T$,  add  to $\mathcal{H}$  the  $k$-subspace $W\in \pgr(k, n)$ with the parity check matrix $A$.

This construction does give an $(n, d, \epsilon)$-evasive $k$-subspace family, as stated by the following lemma.

\begin{lemma}\label{lem_simple}
The $k$-subspace family $\mathcal{H}$ constructed above is $(n, d, \epsilon)$-evasive  and has size polynomial in 
$\min\left\{ \binom{(k+1)(n+1+d)}{(k+1)d}, \binom{(n-k)(n+1+d)}{(n-k)d}\right\}$ 
and $1/\epsilon$.
Moreover, the total time complexity of computing the linear equations defining the $k$-subspaces in $\mathcal{H}$ is polynomial in $|\mathcal{H}|$ (and $\log p$, if $\mathrm{char}(\F)=p>0$).
\end{lemma}

\begin{proof}
We only show that $\mathcal{H}$ is  $(n, d, \epsilon)$-evasive  since the rest of the lemma is obvious from the construction.
 Let $\mathcal{F}$ be the family of all equidimensional subvarieties of $\proj^n$ of dimension $n-k-1$ and degree at most $d$. By \cref{lem_equi}, it suffices to prove that $\mathcal{H}$ is $(\mathcal{F}, \epsilon)$-evasive.
Consider any $\mathcal{V}\in\mathcal{F}$. We want to show that $\mathcal{V}\cap W=\emptyset$ for all but at most $\epsilon$-fraction of $W\in\mathcal{H}$.

First assume $k+1\leq n-k$, or equivalently, $\binom{(k+1)(n+1+d)}{(k+1)d}\leq \binom{(n-k)(n+1+d)}{(n-k)d}$. The Chow form $\widetilde{\mathcal{R}}_\mathcal{V}$ of $\mathcal{V}$ in Stiefel coordinates is a nonzero homogeneous polynomial in $\F[Y_{i,j}: i\in [k+1], j\in [n+1]]$ of degree $(k+1)\deg(\mathcal{V})\leq (k+1)d$.
By the choice of $T$, for all but at most $\epsilon$-fraction of $A\in T$, we have $\widetilde{\mathcal{R}}_\mathcal{V}(A)\neq 0$, which implies $\mathcal{V}\cap W=\emptyset$, where $A$ is a generating matrix of $W$.

By construction, $\mathcal{H}$ is the collection of $k$-subspaces corresponding to the matrices $A\in T$ of full row rank. 
So we have ignored the matrices that do not have full row rank.
But this does not increase the fraction of ``bad'' $W\in \mathcal{H}$ since if $A$ does not have full row rank, then the maximal minors of $A$ are all zero, and $\widetilde{\mathcal{R}}_\mathcal{V}(A)$ must be zero.  
It follows that $\mathcal{V}\cap W=\emptyset$ for all but at most $\epsilon$-fraction of $W\in\mathcal{H}$, as desired.

Now assume $k+1>n-k$. The proof in this case is similar and we omit the details. The only difference is that we use  the Chow form $\widetilde{\mathcal{R}}_\mathcal{V}^\vee$ in dual Stiefel coordinates instead of $\widetilde{\mathcal{R}}_\mathcal{V}$. 
\end{proof}

 \subsubsection{Improved Construction}
 
Before presenting the improved construction, we first introduce some notions from algebraic geometry.
 
\paragraph{Projections.} Suppose $W$ is a $k$-subspace of $\proj^n$, and  $\ell_1, \dots, \ell_{n-k}\in\F[X_1,\dots,X_{n+1}]$ are $n-k$ homogeneous linear polynomials such that  $W=\mathcal{V}(\ell_1,\dots,\ell_{n-k})$. 
Then we have a map $\pi: \proj^n\setminus W\to \proj^{n-k-1}$ defined by
\[
\pi: \mathbf{x}\mapsto (\ell_1(\mathbf{x}), \dots, \ell_{n-k}(\mathbf{x}))
\]
which is well-defined since $\ell_1, \dots, \ell_{n-k}$ never simultaneously vanish on $\proj^n\setminus W$.
We say $\pi$ is a \emph{projection} from $ \proj^n\setminus W$ to $\proj^{n-k-1}$ and $W$ is its \emph{center}. 
Note that if we lift $\pi$ to the  linear map $\pi': \aff^{n+1}\to\aff^{n-k}$ sending $\mathbf{x}\in\aff^{n+1}$ to $(\ell_1(\mathbf{x}), \dots, \ell_{n-k}(\mathbf{x}))\in\aff^{n-k}$, then the center $W$ is simply $\proj\ker(\pi')$.

We need the following lemma, whose proof can be found in \cite{Sha13}.
\begin{lemma}[{\cite[Section~I.5.3, Theorem~7]{Sha13}}]\label{lem_finitepr}
Suppose $\pi:  \proj^n\setminus W\to \proj^m$ is a projection with center $W$ and $\mathcal{V}$ is a subvariety of $\proj^n$ disjoint from $W$. Then $\pi$ restricts to a finite morphism from $\mathcal{V}$ to $\proj^m$.
\end{lemma}

\paragraph{Nondegenerate varieties.}
For a subvariety $\mathcal{V}\subseteq \proj^n$, denote by $\spn(\mathcal{V})$ the smallest projective subspace that contains $\mathcal{V}$.
We say  $\mathcal{V}$ is \emph{nondegenerate} if it is not contained in a hyperplane of $ \proj^n$, or equivalently, $\spn(\mathcal{V})=\proj^n$. 

We need the following  fact from algebraic geometry (see, e.g., \cite[Proposition~0]{EH87} or \cite[Corollary~18.12]{Har13}).

\begin{lemma}\label{lem_span}
The codimension of a nondegenerate irreducible subvariety $\mathcal{V}$ of $\proj^n$ is at most $\deg (\mathcal{V})-1$.
\end{lemma}

\paragraph{A two-step construction.}

We now give an improved construction of  $(n,d,\epsilon)$-evasive  $k$-subspace families on $\proj^n$ as follows.
\begin{enumerate}
\item If $k\leq d-2$, just use the previous simple construction. So assume $k>d-2$. Let $k'=d-2<k$, $n'=k'+n-k<n$, and $\epsilon_0=\epsilon/(2d)$.
\item Use \cref{cor_condenser} to construct a collection $\mathcal{H}_1$ of  $(n'+1)\times (n+1)$ matrix over $\F$  such that 
$|\mathcal{H}_1|=\poly(n, d/\epsilon)$ and
for every $(n+1)\times (n'+1)$ matrix $M$ over $\F$ of rank $n'+1$, all but at most $\epsilon_0$-fraction of $B\in U$ satisfies $\mathrm{rank}(BM)=n'+1$.

We abuse the notation and view $\mathcal{H}_1$ as a collection of linear maps from $\aff^{n+1}$ to $\aff^{n'+1}$.
Then for any linear subspace $W\subseteq \aff^{n+1}$ of dimension $n'+1$, we have $\dim(\pi(W))=\dim(W)=n'+1$ for all but at most $\epsilon_0$-fraction of $\pi\in \mathcal{H}_1$.

\item Construct a collection $\mathcal{H}_2$ of linear maps from $\aff^{n'+1}$ to $\aff^{n-k}$ as follows.
First assume $d>1$. Use \cref{lem_simple} to construct an $(n', d, \epsilon_0)$-evasive $k'$-subspace family $\mathcal{H}_2'$ on $\proj^{n'}$ of size  polynomial in 
$\min\left\{ \binom{(k'+1)(n'+1+d)}{(k'+1)d}, \binom{(n-k)(n'+1+d)}{(n-k)d}\right\}$ 
and $1/\epsilon_0$.
For each $k'$-subspace $W\in \mathcal{H}_2$, compute a surjective linear map $\pi_W: \aff^{n'+1}=\aff^{n-k+k'+1}\to \aff^{n-k}$ such that $W=\proj \ker(\pi_W)$. Let $\mathcal{H}_2=\{\pi_W: W\in \mathcal{H}_2'\}$.

If $d=1$, just let $\mathcal{H}_2$ be the singleton consisting of the identity map on $\aff^{n'+1}=\aff^{n-k}$.

\item  Initialize $\mathcal{H}=\emptyset$. For each $(\pi_1, \pi_2)\in \mathcal{H}_1\times \mathcal{H}_2$, if   $\dim(\ker(\pi_2\circ\pi_1))=k+1$, add the $k$-subspace $\proj\ker(\pi_2\circ\pi_1)$ to $\mathcal{H}$.\footnote{In fact,  $\dim(\ker(\pi_2\circ\pi_1))=k+1$ always holds since $\pi_1$ and $\pi_2$ are surjective. The fact that $\pi_1\in \mathcal{H}_1$ is surjective can be seen from the construction of lossless rank condensers in \cref{cor_condenser}.}
\end{enumerate}
 
We use the construction above  to prove the Main Theorem (\cref{thm_main}) in the projective case. For convenience, we restate it in the following form.

\begin{theorem}[Main Theorem in the projective case]\label{thm_mainproj}
The $k$-subspace family $\mathcal{H}$ constructed above is $(n, d, \epsilon)$-evasive  and has size 
$\poly(N(k,d,n), n, 1/\epsilon)$.
Moreover, the total time complexity of computing the linear equations defining the $k$-subspaces in $\mathcal{H}$ is polynomial in $|\mathcal{H}|$ (and $\log p$, if $\mathrm{char}(\F)=p>0$).
\end{theorem}

\begin{proof} 
The theorem follows from \cref{lem_simple} if $k\leq d-2$. So assume $k>d-2$.
We only show that $\mathcal{H}$ is $(n,d,\epsilon)$-evasive since the rest of the theorem is obvious from the construction.

Let $\mathcal{F}$ be the family of all irreducible subvarieties of $\proj^n$ of dimension $n-k-1$ and degree at most $d$.
By \cref{lem_irred}, it suffices to prove that $\mathcal{H}$ is $(\mathcal{F}, 2\epsilon_0)$-evasive.
Consider any $\mathcal{V}\in\mathcal{F}$. We want to show that $\mathcal{V}\cap W=\emptyset$ for all but at most $(2\epsilon_0)$-fraction of $W\in\mathcal{H}$.

By definition, $\mathcal{V}$ is a nondegenerate irreducible subvariety of $\spn(\mathcal{V})$.
By \cref{lem_span}, the codimension of $\mathcal{V}$ in $\spn(\mathcal{V})$ is at most $d-1$.
Therefore,
\[
\dim(\spn(\mathcal{V}))\leq \dim(\mathcal{V})+d-1= (n-k-1)+(d-1)=n'.
\]
Let $\Lambda\subseteq \proj^n$ be an $n'$-subspace that contains $\spn(\mathcal{V})$. 
By the choice of $\mathcal{H}_1$, all but at most $\epsilon_0$-fraction of $\pi_1\in\mathcal{H}_1$ satisfies $\proj\ker(\pi_1)\cap \Lambda=\emptyset$.
Fix $\pi_1$ that satisfies this condition.
Then $\proj\ker(\pi_1)$ is disjoint from $\mathcal{V}\subseteq\Lambda$.
By \cref{lem_finitepr}, $\pi_1$ induces a finite morphism $\bar{\pi}_1: \mathcal{V}\to \proj^{n'}$.

Let $\mathcal{V}'=\bar{\pi}_1(\mathcal{V})\subseteq\proj^{n'}$. Then $\mathcal{V}'$ is a projective subvariety of dimension $\dim(\mathcal{V})=n-k-1$ and degree at most $d$.\footnote{The degree bound follows from, e.g., an inductive application of \cite[Proposition~5.5]{Mum76}.}
By the choice of $\mathcal{H}_2$, all but at most $\epsilon_0$-fraction of $\pi_2\in\mathcal{H}_2$ satisfies $\proj\ker(\pi_2)\cap \mathcal{V}'=\emptyset$.
Fix $\pi_2$ that satisfies this condition.
(If $d=1$ and $\pi_2$ is the identity map, we regard $\proj\ker(\pi_2)$ as an empty set, in which case this condition is also satisfied.)
By \cref{lem_finitepr}, $\pi_2$ induces a finite morphism $\bar{\pi}_2: \mathcal{V}'\to \proj^{n-k-1}$.
So we have a finite morphism $\bar{\pi}_2\circ \bar{\pi}_1:\mathcal{V}\to \proj^{n-k-1}$. Note that  $\bar{\pi}_2\circ \bar{\pi}_1$ is defined by restricting a projection with center $\proj \ker(\pi_2\circ \pi_1)$ to $\mathcal{V}$.
As $\bar{\pi}_2\circ \bar{\pi}_1$ is well-defined on $\mathcal{V}$, its center $\proj \ker(\pi_2\circ \pi_1)$ is disjoint from $\mathcal{V}$ and this also forces $\dim  \ker(\pi_2\circ \pi_1)=k+1$ by \cref{lem_dimcap}.

By the above argument and the construction of $\mathcal{H}$, all but at most $2\epsilon_0$-fraction of the $k$-subspaces in $\mathcal{H}$ are disjoint from $\mathcal{V}$, as desired.  
\end{proof}

\subsubsection{Derandomization of Noether's Normalization Lemma}

We now prove \cref{thm_dnoetherproj} and \cref{thm_dnoetheraff}. For convenience, we restate the theorems below.

\dnoetherproj*

\begin{proof}
If $r=n$, just use the identity map on $\aff^{n+1}=\aff^{r+1}$. So assume $r<n$. 
Use \cref{thm_mainproj} to construct an $(n, d, \epsilon)$-evasive $k$-subspace family $\mathcal{H}$ on $\proj^n$ of  size 
$\poly(N(k,d,n), n, 1/\epsilon)$. 
For each $k$-subspace $W\in \mathcal{H}$, compute a surjective linear map $\pi_W: \aff^{n+1}\to\aff^{r+1}$ such that $W=\proj \ker(\pi_W)$. Let $\mathcal{L}=\{\pi_W: W\in \mathcal{H}\}$.
Then $\mathcal{L}$ is a desired collection of linear maps by \cref{lem_finitepr}.
\end{proof}

\dnoetheraff*

\begin{proof}
If $r=n$,  just use the identity map on $\aff^{n}=\aff^{r}$. 
If $r=0$, use the only map  $\aff^{n}\to\aff^{0}$. 
So assume $0<r<n$. 
Regard $\aff^n$ as an open subset of $\proj^n$ via $(x_1,\dots,x_n)\mapsto (x_1,\dots,x_n, 1)$.
Similarly, regard $\aff^r$ as an open subset of $\proj^r$ via $(x_1,\dots,x_r)\mapsto (x_1,\dots,x_r, 1)$.
Let $H_\infty$ be the hyperplane at infinity of $\proj^n$ defined by $X_{n+1}=0$.

Use \cref{thm_mainproj} to construct an $(n-1, d, \epsilon)$-evasive $k$-subspace family $\mathcal{H}$ on $H_\infty\cong\proj^{n-1}$ of  size 
$\poly(N(k,d,n-1), n, 1/\epsilon)$. 
For each $W\in \mathcal{H}$, choose $n-k=r+1$ homogeneous linear polynomials $\ell_1,\dots,\ell_{r+1}\in \F[X_1,\dots, X_{n+1}]$
such that $\ell_{r+1}=X_{n+1}$, $\ell_1,\dots,\ell_r\in\F[X_1,\dots,X_n]$, and $W=\mathcal{V}(\ell_1,\dots,\ell_{r+1})$. 
This is possible as $W\subseteq H_\infty=\mathcal{V}(X_{n+1})$.
These $r+1$ linear polynomials determine a projection $\pi_W:  \proj^n\setminus W\to \proj^{r}$, defined by
\[
\mathbf{x}=(x_1,\dots,x_{n+1})\mapsto (\ell_1(\mathbf{x}), \dots, \ell_{r+1}(\mathbf{x}))= (\ell_1(\mathbf{x}), \dots, \ell_{r}(\mathbf{x}), x_{n+1}).
\]
As $x_{n+1}=1$ for $\mathbf{x}\in\aff^n$, we have $\pi_W(\aff^n)\subseteq\aff^r$.
Restricting $\pi_W$ on $\aff^n$ yields a map $\pi_W|_{\aff^n}:\aff^n\to \aff^r$,
which is a linear map as $\ell_1,\dots,\ell_r$ are homogeneous linear polynomials in $\F[X_1,\dots,X_n]$.
Let $\mathcal{L}=\{\pi_W|_{\aff^n}: W\in\mathcal{H}\}$.
 
Let $\mathcal{V}$ be a subvariety of $\aff^n$ of dimension $r$ and degree at most $d$.
Its projective closure $\mathcal{V}_\mathrm{cl}$ has dimension $\dim(\mathcal{V})=r$ and degree $\deg(\mathcal{V})\leq d$.
By the definition of  $\mathcal{V}_\mathrm{cl}$, none of the irreducible components of $\mathcal{V}_\mathrm{cl}$ is fully contained in $H_\infty$.
So by \cref{lem_hyperplane}, the projective subvariety $\mathcal{V}_\mathrm{cl}\cap H_\infty$  has dimension $r-1$ and degree at most $d$.

By the choice of $\mathcal{H}$, all but at most $\epsilon$-fraction of $W\in\mathcal{H}$ are disjoint from $\mathcal{V}_\mathrm{cl}\cap H_\infty$ and hence from $\mathcal{V}_\mathrm{cl}$.
So we just need to prove that for every $W\in\mathcal{H}$ disjoint from $\mathcal{V}_\mathrm{cl}$ and the corresponding projection $\pi_W$,
the map $\pi_W|_{\mathcal{V}}:  \mathcal{V}\to \aff^{r}$ is a surjective finite morphism.
This follows from \cref{lem_finitepr} and the fact that  $\mathcal{V}=\mathcal{V}_{\mathrm{cl}}\cap (\pi_W)^{-1}(\aff^r)$. 
\end{proof}

\subsubsection{Proof of the Main Theorem in the Affine Case}

We now prove \cref{thm_main} in the affine case.
Recall that we may view $\aff^n$ as an open subset of $\proj^n$ via the map $(x_1,\dots,x_n)\mapsto (x_1,\dots,x_n, 1)$. In this way, $\proj^n$ becomes the disjoint union of $\aff^n$ and the hyperplane at infinity $H_\infty$ defined by $X_{n+1}=0$.

We use the following lemma to  reduce the affine case  to the projective case.
 
\begin{lemma}\label{lem_infty}
Let  $n,d\in\N^+$, $k\in\{0,1,\dots,n-1\}$, and $\epsilon'\in (0,1/2)$. 
Suppose $\mathcal{H}$ is an $(n, d, \epsilon')$-evasive $k$-subspace family on $\proj^n$.
Then 
\[
\mathcal{H}'=\{W\cap \aff^n: W\in \mathcal{H}, W\not\subseteq H_\infty\}
\] 
is an $(n, d, \epsilon)$-evasive affine $k$-subspace family on $\aff^n$, where $\epsilon=\epsilon'/(1-\epsilon')\leq 2\epsilon'$. Moreover, 
\[
\mathcal{H}''=\{W\in \mathcal{H}: W\not \subseteq H_\infty\}=\{W_\mathrm{cl}: W\in \mathcal{H}'\}
\] 
is an $(n, d, \epsilon)$-evasive $k$-subspace family on $\proj^n$.
\end{lemma}

\begin{proof}
By  $(n, d, \epsilon')$-evasiveness of $\mathcal{H}$, at most $\epsilon'$-fraction of $W\in\mathcal{H}$ are fully contained in $H_\infty$. Throwing away those $k$-subspaces fully contained in $H_\infty$ increases the error parameter $\epsilon'$ by at most a factor of $1/(1-\epsilon')$.
Therefore, 
$\mathcal{H}''=\{W\in \mathcal{H}: W\not \subseteq H_\infty\}$
 is $(n, d, \epsilon)$-evasive.
We want to prove that  $\mathcal{H}'=\{W\cap \aff^n: W\in \mathcal{H}''\}$ is also $(n, d, \epsilon)$-evasive. 

Consider a  subvariety $\mathcal{V}\subseteq \aff^n$ of degree at most $d$. Let $\mathcal{V}_1,\dots,\mathcal{V}_s$ be the irreducible components of $\mathcal{V}$.
The projective closure $\mathcal{V}_\mathrm{cl}$ of $\mathcal{V}$ has the irreducible components $(\mathcal{V}_1)_\mathrm{cl}, \dots, (\mathcal{V}_s)_\mathrm{cl}$.
Consider a $k$-subspace $W\in\mathcal{H}''$ that evades $\mathcal{V}_\mathrm{cl}$. We just need to prove that $W\cap  \aff^n$ evades $\mathcal{V}$. This is true since for each $i\in [s]$,
\[
\dim((W\cap \aff^n)\cap \mathcal{V}_i)\leq \dim(W\cap (\mathcal{V}_i)_\mathrm{cl}) \leq  \dim(W)+\dim((\mathcal{V}_i)_\mathrm{cl})-n=\dim(W\cap \aff^n)+\dim(\mathcal{V}_i)-n
\]
where the second inequality holds since $W$ evades $\mathcal{V}_\mathrm{cl}$ and the last equality uses the fact $W\not\subseteq H_\infty$.
\end{proof}

The affine case of \cref{thm_main} now follows easily.

\begin{proof}[Proof of \cref{thm_main} in the affine case.]
If $k=n$, just choose $\mathcal{H}=\aff^n$. Now assume $k<n$. Construct an $(n, d, \epsilon/2)$-evasive $k$-subspace family  $\mathcal{H}$ on $\proj^n$ using \cref{thm_mainproj}.
Then 
\[
\mathcal{H}':=\{W\cap \aff^n: W\in \mathcal{H}, W\not\subseteq H_\infty\}
\] 
is an $(n, d, \epsilon)$-evasive affine $k$-subspace family on $\aff^n$ by \cref{lem_infty}.
The nonhomogeneous linear equations defining $W\cap \aff^n\in\mathcal{H}'$ can be easily computed from the homogeneous linear equations defining $W\in\mathcal{H}$ by letting $X_{n+1}=1$.
\end{proof}

The proof of \cref{thm_main} is now complete.

\paragraph{Strengthening \cref{thm_main} in the affine case.}
For projective subvarieties $\mathcal{V}_1,\mathcal{V}_2\subseteq \proj^n$ such that $\dim(\mathcal{V}_1) + \dim(\mathcal{V}_2)\geq n$,  the minimum possible dimension of $\mathcal{V}_1\cap \mathcal{V}_2$ is $\dim(\mathcal{V}_1) + \dim(\mathcal{V}_2)- n$, as stated by \cref{lem_dimcap}.
Nevertheless, for two affine subvarieties  $\mathcal{V}_1,\mathcal{V}_2\subseteq \aff^n$, it is possible that the intersection of $\mathcal{V}_1$ and $\mathcal{V}_2$ is empty even if its expected dimension $\dim(\mathcal{V}_1) + \dim(\mathcal{V}_2) - n$ is nonnegative.
For example, the intersection of two distinct and parallel affine hyperplanes $\mathcal{V}_1,\mathcal{V}_2\subseteq\aff^n$ is always empty even if $n\geq 2$.
The reason this happens is that, while the dimension of $(\mathcal{V}_1)_\mathrm{cl}\cap (\mathcal{V}_2)_\mathrm{cl}$ is $n-2$ (as expected), this intersection is fully contained in the hyperplane $H_\infty$, which is excluded from $\aff^n$.
 
 One may strengthen the definition of evading (\cref{defi_evade}) by requiring the intersection of $\mathcal{V}_1$ with every irreducible component of $\mathcal{V}_2$ to have \emph{exactly} the expected dimension.
It is possible to construct explicit affine $k$-subspace families satisfying \cref{thm_main} even under this stronger definition of evading. We sketch the  ideas as follows but omit the details.

First construct an $(n-1, d, \epsilon')$-evasive  $(k-1)$-subspace family $\mathcal{H}'$ on $H_\infty\cong \proj^{n-1}$ for some sufficiently small $\epsilon'$ depending on $\epsilon$. Then extend each $W\in\mathcal{H}'$ to a collection of $k$-subspaces by picking $p\in \aff^n$ and taking the $k$-subspace $J(W, p)$, where the coordinates of $p$ are chosen from an $\epsilon'$-hitting set for polynomials of degree at most $d$ given by \cref{lem_simple}. Call the resulting $k$-subspace family $\mathcal{H}$. It is easy to prove that $\mathcal{H}$ is $(n, d, O(\epsilon'))$-evasive. 

Furthermore, the affine $k$-subspace family $\{W\cap \aff^n: W\in \mathcal{H}\}$ is $(n, d, \epsilon)$-evasive even under the stronger definition of evading. To see this, consider an affine subvariety $\mathcal{V}\subseteq\aff^n$ of degree at most $d$. 
For most $W\in \mathcal{H}$, we have:
\begin{itemize}
\item For each irreducible component $\mathcal{V}_i$ of $\mathcal{V}$, the dimension of $(\mathcal{V}_i)_\mathrm{cl}\cap W$ is as expected by $(n, d, O(\epsilon'))$-evasiveness of $\mathcal{H}$ and \cref{lem_dimcap}. Call this dimension $d_i$, which is $-\infty$ if $(\mathcal{V}_i)_\mathrm{cl}\cap W=\emptyset$.
\item Moreover, the dimension of  $((\mathcal{V}_i)_\mathrm{cl}\cap H_\infty)\cap (W\cap H_\infty)$ is at most $d_i-1$ by $(n-1, d, \epsilon')$-evasiveness of $\mathcal{H}'$.
\item Therefore,  $\mathcal{V}_i\cap (W\cap \aff^n)$ has the expected dimension $d_i$ for each irreducible component $\mathcal{V}_i$ of $\mathcal{V}$.
\end{itemize}

\section{Lower Bound}\label{sec_lb}

We prove \cref{thm_lb} in this section. The main tool is the notion of \emph{Chow varieties}, which parameterize projective subvarieties. More precisely, they parametrize a generalization of projective subvarieties, called \emph{(effective) algebraic cycles} on a projective space.

\paragraph{Algebraic cycles.} An \emph{algebraic $r$-cycle} (or simply \emph{$r$-cycle})  on $\proj^n$  is a formal linear combination $D=\sum c_i \mathcal{V}_i$ of finitely many irreducible subvarieties $\mathcal{V}_i\subseteq \proj^n$ of dimension $r$, where the coefficients $c_i$ are integers. 
The \emph{degree} of  $D$ is $\deg(D):=\sum c_i \deg(\mathcal{V}_i)$.
The \emph{support} of $D$ is $\mathrm{supp}(D):=\bigcup_{c_i\neq 0} \mathcal{V}_i$.
An $r$-cycle is \emph{effective} if all its coefficients are nonnegative.
Denote by $C(r,d,n)$ the set of all effective $r$-cycles of degree $d$ on $\proj^{n}$.

\paragraph{Chow varieties.}
Let $k\in\{0,1,\dots,n-1\}$ and $r=n-k-1$.
The definition of Chow forms naturally extends to effective $r$-cycles. Namely, for an effective $r$-cycle $D=\sum_{i=1}^r c_i \mathcal{V}_i$ of degree $d$ on $\proj^{n}$, define the Chow form of $D$
to be $\mathcal{R}_D:=\prod_{i=1}^r \mathcal{R}_{\mathcal{V}_i}^{c_i}$. 

Note that $\mathcal{R}_D$ is a vector in $(R/I)_d$ and is uniquely determined by $D$ up to scalars. Write $[\mathcal{R}_D]$ for the point in $\proj (R/I)_d$ represented by  $\mathcal{R}_D$. Then we have  map $\psi: C(r, d, n)\to \proj (R/I)_d$, given by
\[
\psi: D\mapsto [\mathcal{R}_D],
\]
called the \emph{Chow embedding} of $C(r, d, n)$. Indeed, it embeds $C(r, d, n)$ in $\proj (R/I)_d$ as a projective subvariety, as stated by the following theorem of Chow and van der Waerden \cite{CvdW37}.
\begin{theorem}[\cite{CvdW37}]
The map $\psi$ is injective and its image is Zariski-closed.
\end{theorem}
A proof can also be found in \cite[Chapter~4]{GKZ94}. We identify $C(r, d, n)$ with its image under $\psi$ and view it as a projective variety. This variety is called the \emph{Chow variety} of effective $r$-cycles of degree $d$ on $\proj^n$.

\begin{exmp} 
Let $V$ be the subspace of homogeneous polynomials in $\F[X_1,\dots,X_{n+1}]$ of degree $d$.
Then $C(n-1, d, n)$ is simply the projective space $\proj V$ (see \cref{exmp_hyp}).
\end{exmp}

\begin{exmp}\label{exmp_grvar}
$C(r, 1, n)$ is the Grassmannian $\gr(r+1, n+1)$ (or $\pgr(r, n)$) embedded in $\proj^{\binom{n+1}{r+1}-1}=\proj^{\binom{n+1}{k+1}-1}$ via $\phi^\vee$ (see \cref{exmp_gr}).
\end{exmp}

\paragraph{The dimension of Chow varieties.} 
When $d=1$, the Chow variety $C(r, d, n)$ is just the Grassmannian $\gr(r+1, n+1)$ (see \cref{exmp_grvar}) and its dimension is well known to be $(r+1)(n-r)$ \cite{Har13}. 
When $d>1$, the dimension of   $C(r, d, n)$  was determined by Azcue in his Ph.D. thesis \cite{Azc93} and independently by Lehmann \cite{Leh17}. We state their result as follows.

\begin{theorem}[{\cite{Azc93, Leh17}}]\label{thm_chowdim}
For $d>1$ and $0\leq r<n$, the dimension of $C(r,d, n)$ is
\[
\max\left\{ d(r+1)(n-r), \binom{d+r+1}{r+1}-1+(r+2)(n-r-1)   \right\}.
\]
\end{theorem}
This theorem  was previously proved by Eisenbud and Harris \cite{EH92} for the special case $r=1$. 

\begin{rem}
To prove \cref{thm_lb}, we only need a lower bound for the dimension of the Chow variety, which is much easier to prove than \cref{thm_chowdim}. Indeed, it is not difficult to see that $d(r+1)(n-r)$ is the dimension of the space of unions of $d$ $r$-subspaces of $\proj^n$, and $\binom{d+r+1}{r+1}-1+(r+2)(n-r-1)$ is the dimension of the space of degree-$d$ hypersurfaces in $(r+1)$-subspaces of $\proj^n$.
\end{rem}

\paragraph{Lower bound via dimension counting.} We now restate \cref{thm_lb} and prove it using a dimension counting argument.

\lb*
\begin{proof}
Consider an arbitrary $k$-subspace $W\in\mathcal{H}$.
We may think of each point in $\proj (R/I)_d$ as a homogeneous polynomial  of degree $d$ in Pl\"ucker coordinates  modulo scalars and the ideal $I$ of  Pl\"ucker relations. We know Pl\"ucker coordinates always satisfy the  Pl\"ucker relations. So it makes sense to talk about if a point in $\proj (R/I)_d$ vanishes at $\phi(W)$ or not, as it does not depend on the choice of the homogeneous polynomial representing this point. Note that the constraint of $p\in \proj (R/I)_d$ vanishing at $\phi(W)$ is a linear equation in the homogeneous coordinates of $p$. So the set of points in $\proj (R/I)_d$ vanishing at $\phi(W)$ is a hyperplane of $\proj (R/I)_d$, which we denote by $H_W$.

Let $r=n-k-1$.
Assume $|\mathcal{H}|\leq  \dim(C(r,d, n)) $. Then we have 
\[
C(r,d, n)\cap \bigcap_{W\in\mathcal{H}} H_W\neq\emptyset
\]
 since taking the intersection with a hyperplane reduces the dimension of a projective subvariety by at most one (\cref{lem_hyperplane} or \cref{lem_dimcap}). So there exists an effective $r$-cycle $D=\sum_{i=1}^s c_i \mathcal{V}_i\in C(r, d, n)$, where $c_1,\dots,c_s>0$, such that $\psi(D)=[\mathcal{R}_D]$ vanishes at $\phi(W)$ for all $W\in\mathcal{H}$. 
 
Let $\mathcal{V}=\mathrm{supp}(D)=\bigcup_{i=1}^s \mathcal{V}_i$. Note  $\mathcal{V}\in \mathcal{F}$ since $\deg(\mathcal{V})\leq\deg(D)=d$.
For all  $W\in\mathcal{H}$, we know $\mathcal{R}_D=\prod_{i=1}^s \mathcal{R}_{\mathcal{V}_i}^{c_i}$ vanishes at $\phi(W)$, or equivalently, $\mathcal{R}_\mathcal{V}=\prod_{i=1}^s \mathcal{R}_{\mathcal{V}_i}$ vanishes at $\phi(W)$.  This implies $\mathcal{V}\cap W\neq \emptyset$ for all $W\in\mathcal{H}$. As $\mathcal{V}\in\mathcal{F}$, this contradicts our assumption about $\mathcal{H}$. We conclude 
\[
 |\mathcal{H}| \geq  \dim(C(r,d, n)) +1.
\]
The dimension of $C(r,d, n)$ is $(r+1)(n-r)$ when $d=1$ and is given by \cref{thm_chowdim} when $d>1$. Plugging in $r=n-k-1$ proves the theorem.
\end{proof}

\paragraph{A non-explicit construction.} Next, we show that the lower bound in \cref{thm_lb} is tight by matching it with a non-explicit construction. 

First, we need a bound for the degree of $C(r,d,n)$.
Define $M(r,d,n)$ by
\[
M(r,d,n):=\begin{cases} ((r+1)(n-r))!\prod_{i=1}^{r+1}\frac{(i-1)!}{(n-r+i-1)!} & \text{if } d=1, \\
 3^\lambda & \text{if } d>1.
\end{cases}
\]
where $\lambda:=\min\left\{\binom{n+d}{d}^{r+1}, \binom{n+d}{d}^{n-r}, \binom{\binom{n+1}{r+1}+d-1}{d}\right\}-1$.

\begin{lemma}\label{lem_degreebd}
The degree of $C(r,d,n)$ in $\proj(R/I)_d$ is at most $M(r,d,n)$.
\end{lemma}

\begin{proof}
When $d=1$, $C(r,d,n)$ is the Grassmannian $\pgr(r,n)$ and its degree under the Pl\"ucker embedding is known to be exactly $M(r,d,n)$ \cite{Kle76}. 

Now assume $d>1$. In this case, we use the following argument in \cite{Cat92}. 
Green and Morrison \cite{GM86} proved that the Chow variety $C(r,d,n)$ is defined by equations of degree at most three.
It follows from \cref{lem_hyperplane} (or B\'ezout's inequality \cite{Hei83}) that the degree of  $C(r,d,n)$ in $\proj(R/I)_d$ is bounded by $3^{\dim(\proj(R/I)_d)}$.
So it remains to prove that $\dim(\proj(R/I)_d)\leq \lambda$.

Recall that $R=\F\left[X_S: S\in \binom{[n+1]}{k+1}\right]$ where $k=n-r-1$.
So $\dim(R_d)=\binom{\binom{n+1}{k+1}+d-1}{d}=\binom{\binom{n+1}{r+1}+d-1}{d}$. Therefore, 
\[
\dim(\proj(R/I)_d)\leq \dim(\proj R_d)=\binom{\binom{n+1}{r+1}+d-1}{d}-1.
\]
On the other hand, the linear map $\mathcal{R}_\mathcal{V}\mapsto \widetilde{R}_\mathcal{V}$ embeds $\proj(R/I)_d$ in $\proj V$, where $V\subseteq \F[Y_{i,j}: i\in [k+1], j\in [n+1]]$ is the linear space of multihomogeneous polynomials of degree $(d,\dots,d)$ in the $k+1$ groups of variables $\{Y_{i_1},\dots,Y_{i,n+1}\}$, $i=1,\dots,k+1$. 
Therefore, 
\[
\dim(\proj(R/I)_d) \leq \dim(\proj V) =\binom{n+d}{d}^{k+1}-1=\binom{n+d}{d}^{n-r}-1.
\] 
Similarly, using the  linear map $\mathcal{R}_\mathcal{V}^\vee\mapsto \widetilde{R}_\mathcal{V}^\vee$, we get $\dim(\proj(R/I)_d)  \leq \binom{n+d}{d}^{r+1}-1$.
\end{proof}

The following theorem gives a non-explicit construction of  $\mathcal{H}$ whose cardinality matches the lower bound $\dim(C(r,d,n))+1$ in \cref{thm_lb}.

\begin{theorem}\label{thm_nonexplicit}
Let $n,d\in\N^+$, $k\in\{0,1,\dots,n-1\}$, $r=n-k-1$,  $t=\dim(C(r,d,n))$, and $\delta>0$. 
Let $\mathcal{F}$ be the family of equidimensional projective subvarieties of $\proj^n$ of dimension $n-k-1$ and degree at most $d$.
Let $S$ be a finite subset of $\F$ such that  
\[
|S|\geq M(r,d,n)\cdot t(k+1)d/\delta.
\]
Let $\mathcal{H}=\{W_1,\dots,W_{t+1}\}\subseteq\pgr(k,n)$ where the entries of the generating matrices of $W_1,\dots,W_{t+1}$ are chosen independently at random from $S$. Then with probability at least $1-\delta$, $\mathcal{H}$ is an $\mathcal{F}$-evasive $k$-subspace family on $\proj^n$. 
\end{theorem}

\begin{proof}
Whenever $\mathcal{H}$ fails to be $\mathcal{F}$-evasive, there exists an effective $r$-cycle $D$ of degree at most $d$ such that $\mathrm{supp}(D)$ intersects $W$ for all $W\in\mathcal{H}$. By adding extra $r$-subspaces to $D$ if necessary, we may assume the degree of $D$ is exactly $d$, i.e., $D\in C(r,d,n)$.

As argued in the proof of  \cref{thm_lb}, for a $k$-subspace $W\in\pgr(k,n)$, the support of $D\in C(r,d,n)$ intersects $W$ iff $D$ lies in a hyperplane $H_W$ of $\proj (R/I)_d$ corresponding to $W$.
So we just need to prove that the condition
\[
C(r,d, n)\cap \bigcap_{W\in\mathcal{H}} H_W=\emptyset
\] 
holds with probability at least $1-\delta$. Suppose $W_1,\dots,W_{i-1}$ are already chosen. Let $C=C(r,d, n)\cap \bigcap_{j=1}^{i-1} H_{W_j}$. By induction, it suffices to show that $\dim(C\cap H_{W_i})\leq \dim(C)-1$ holds with probability at least $1-\delta/t$. (Again, the dimension of an empty set is assumed to be $-\infty$.)

Consider an irreducible component $C_0$ of $C$. Fix $D=\sum_{j=1}^s c_j \mathcal{V}_j\in C_0$. The Chow form $\widetilde{R}_D=\prod_{j=1}^s \widetilde{R}_{\mathcal{V}_j}^{c_j}$ is a nonzero polynomial of degree $(k+1)d$. 
Let $M$ be the randomly chosen generating matrix of $W_i$.
By the Schwartz--Zippel Lemma \cite{Sch80, Zip79}, $\widetilde{R}_D(M)\neq 0$ holds with probability at least $1-(k+1)d/|S|$. When this occurs, we have $D\not \in H_{W_i}$ and hence  $\dim(C_0\cap H_{W_i})\leq \dim(C_0)-1$ by \cref{lem_hyperplane}.
The number of irreducible components of $C=C(r,d, n)\cap \bigcap_{j=1}^{i-1} H_{W_j}$ is bounded by $\deg(C)\leq \deg (C(r,d, n))\leq M(r,d,n)$, where the first inequality uses \cref{lem_hyperplane} and the second inequality holds by \cref{lem_degreebd}.
By the union bound,  the probability that $\dim(C\cap H_{W_i})\leq \dim(C)-1$ does not occur is bounded by
\[
M(r,d,n)\cdot (k+1)d/|S| \leq \delta/t
\]
as desired.
\end{proof}

\begin{rem}
While the cardinality $|\mathcal{H}|$ in \cref{thm_nonexplicit} is optimal, an unsatisfying issue here is that the elements in $S$ are huge when $d>1$. In particular, when $\min\{k, n-k\}$ is linear in $n$, these elements have exponential bit-length even if $d>1$ is bounded. This is due to the poor bound $M(r,d,n)$ for the number of irreducible components that we use. We suspect that this bound can be greatly improved.\footnote{It suffices to bound the number of the irreducible components of  $C(r,d, n)\cap \bigcap_{j=1}^{i-1} H_{W_j}$ whose general member is a subvariety (or even an irreducible subvariety) of $\proj^n$.}
In \cite[Exercise~3.28]{Kol13}, Koll{\'a}r outlined a method of proving a more effective bound for the number of irreducible components of $C(r,d,n)$. 
 Guerra \cite{Gue99} extended this method for Chow varieties associated with general projective varieties. Unfortunately, it is not clear to us if this method can be extended to bound the number of irreducible components of the intersection $C(r,d, n)\cap \bigcap_{j=1}^{i-1} H_{W_j}$. So we leave it as an open problem to obtain a more effective bound for the entries of the generating matrices in \cref{thm_nonexplicit}.
\end{rem}

\section{Application to PIT for Depth-4 Circuits}\label{sec_app}

In this section, we use explicit variety evasive subspace families to obtain a black-box PIT algorithm for non-SG $\Sigma\Pi\Sigma\Pi(k, r)$ circuits, thereby proving \cref{thm_pit}. The proof only uses the simple construction of variety evasive subspace families (\cref{lem_simple}).

We first define $\Sigma\Pi\Sigma\Pi(k, r)$ circuits and non-SG $\Sigma\Pi\Sigma\Pi(k, r)$ circuits.

\begin{definition}[$\Sigma\Pi\Sigma\Pi(k, r)$ circuit]
An algebraic circuit $C$ over $\F$ is a \emph{$\Sigma\Pi\Sigma\Pi(k, r)$ circuit} if
it has the form
\begin{equation}\label{eq_depth4}
C(X_1,\dots,X_n)=\sum_{i=1}^{k'} F_i=\sum_{i=1}^{k'} \prod_{j=1}^{d_i} Q_{i,j}
\end{equation}
where $k'\leq k$, $d_1,\dots, d_{k'}\in \N^+$, $F_i=\prod_{j=1}^{d_i} Q_{i,j}$ for $i\in [k']$, and each $Q_{i,j}$ is a polynomial in $X_1,\dots,X_n$ of degree at most $r$ over $\F$. 
The \emph{degree} of the circuit $C$ is defined to be $\max\{\deg(F_i): i\in [k']\}$.
In addition:
\begin{itemize}
\item $C$ is \emph{minimal} if $\sum_{i\in I} F_i\neq 0$ for all nonempty proper subset $I\subseteq [k']$.
\item $C$ is \emph{homogeneous} if all the polynomials $F_i$ are homogeneous of the same degree.
\item Let $\gcd(C):=\gcd(F_1,\dots,F_{k'})$. We say $C$ is \emph{simple} if $\gcd(C)=1$.
In general, we have $C=\gcd(C)\cdot\simple(C)$ where $\simple(C)$ is a simple $\Sigma\Pi\Sigma\Pi(k, r)$ circuit, called the \emph{simple part} of $C$. Note the simple part of a minimal $\Sigma\Pi\Sigma\Pi(k, r)$ circuit is still minimal.
\end{itemize}
The polynomial computed by $C$ is again denoted by $C$ by an abuse of notation.
\end{definition}

\begin{definition}[Non-SG circuit]
We say a minimal, simple, and homogeneous $\Sigma\Pi\Sigma\Pi(k, r)$ circuit  $C(X_1,\dots,X_n)=\sum_{i=1}^{k'} F_i$ as in \eqref{eq_depth4} is \emph{non-SG} if
there exists $i\in [k']$ such that
\[
\bigcap_{j\in [k']\setminus i}\mathcal{V}(F_j)\not\subseteq \mathcal{V}(F_i)
\]
where $\mathcal{V}(F)$ denotes the subvariety of $\proj^n$ defined by $F$.
More generally, a minimal and simple  $\Sigma\Pi\Sigma\Pi(k, r)$ circuit $C(X_1,\dots,X_n)=\sum_{i=1}^{k'} F_i$ of degree $d$ is non-SG if its homogenization 
\[
\widetilde{C}(X_1,\dots,X_{n+1})=\sum_{i=1}^{k'}  F_i(X_1/X_{n+1}, \dots, X_n/X_{n+1})\cdot X_{n+1}^{d}=\sum_{i=1}^{k'} \prod_{j=1}^{d_i'} \widetilde{Q}_{i,j}
\]
is non-SG, where $d_i'=d_i+(d-\deg(F_i))$ and $\widetilde{Q}_{i,j}$ equals the homogenization of $Q_{i,j}$ if $j\leq d_i$ and equals $X_{n+1}$ if $j>d_i$.
A minimal  $\Sigma\Pi\Sigma\Pi(k, r)$  circuit $C$ is non-SG if $\simple(C)$ is non-SG.
Finally, a $\Sigma\Pi\Sigma\Pi(k, r)$  circuit is non-SG if it has an equivalent minimal non-SG $\Sigma\Pi\Sigma\Pi(k, r)$  circuit.
\end{definition}

 We restate our result (\cref{thm_pit}) and then give a proof.

\pit*

\begin{proof}
If $n\leq k-1$, we may simply use \cref{lem_lowdeg} to construct a $\frac{1}{2}$-hitting set of size polynomial in $ \binom{n+d}{n}\leq \binom{k-1+d}{k-1}$ 
for $n$-variate polynomials of degree at most $d$, and then run the corresponding black-box PIT algorithm.
So assume $n>k-1$.

Consider a nonzero non-SG $\Sigma\Pi\Sigma\Pi(k, r)$ circuit $C$ of degree at most $d$.
We want to design a black-box PIT algorithm for $C$.
By replacing $C$ with an equivalent minimal non-SG circuit, we may assume $C$ is minimal.
Let $D=\gcd(C)$ and $E=\simple(C)$.
Let $\widetilde{C}$, $\widetilde{D}$, and $\widetilde{E}$ be the homogenization of $C$, $D$, and $E$ respectively.
Then $\widetilde{D}=\gcd(\widetilde{C})$, $\widetilde{E}=\simple(\widetilde{C})$, and $\widetilde{C}=\widetilde{D}\cdot\widetilde{E}$.

Let $\mathcal{H}$ be an affine $(k-1)$-subspace family on $\aff^n$ of size $\poly(\binom{k(n+1+r^k)}{kr^k}, d)$
such that $\mathcal{H}':=\{W_\mathrm{cl}: W\in \mathcal{H}\}$ is an $(n, r^k, \frac{1}{4d})$-evasive $(k-1)$-subspace family on $\proj^n$.
Such a family $\mathcal{H}$ can be computed using \cref{lem_infty} and  \cref{lem_simple}. 
We claim
\begin{enumerate}
\item $\widetilde{D}|_W\neq 0$ for all but at most $\frac{1}{4}$-fraction of $W\in \mathcal{H}'$, and
\item $\widetilde{E}|_W\neq 0$ for all but at most $\frac{1}{4}$-fraction of $W\in \mathcal{H}'$.
\end{enumerate}
Assume these two claims hold. Then for at least half of $W\in \mathcal{H}$, we have $\widetilde{C}|_{W_\mathrm{cl}}\neq 0$ and hence $C|_{W}=\widetilde{C}|_{W_\mathrm{cl}\cap \aff^n}\neq 0$, where we use the facts that $\widetilde{C}(X_1,\dots,X_n,1)$ equals $C(X_1,\dots,X_n)$ and  $W_\mathrm{cl}\cap \aff^n$ is dense in $W_\mathrm{cl}$.
The restriction of $C$ to each $W\cong\aff^{k-1}$ is a $(k-1)$-variate polynomial of degree at most $d$.
So to test if $C|_W$ is zero, we just need to use \cref{lem_lowdeg} to construct a hitting set in $W$ of size $\poly(\binom{k-1+d}{k-1})$ for $(k-1)$-variate polynomials of degree at most $d$. Take the union of these hitting sets to obtain a hitting set of size $\poly(\binom{k(n+1+r^k)}{kr^k}, d, \binom{k-1+d}{k-1})$ and we are done.

So it remains to prove the two claims. Note $\widetilde{D}$ is the product of at most $d$ factors whose degrees are bounded by $r$. The first claim then follows from the $(n, r^k, \frac{1}{4d})$-evasiveness of $\mathcal{H}'$ and the union bound.

Now we prove the second claim. By definition, $\widetilde{E}$ is a non-SG $\Sigma\Pi\Sigma\Pi(k, r)$ circuit.
Suppose it has the form
\begin{equation}\label{eq_simple}
\widetilde{E}=\sum_{i=1}^{k'} F_i=\sum_{i=1}^{k'} \prod_{j=1}^{d_i} Q_{i,j}
\end{equation}
where each $Q_{i,j}$ is a homogeneous polynomial of degree at most $r$.
As $\widetilde{E}$ is non-SG, there exists $i_0\in [k']$ such that
\[
\bigcap_{i\in [k']\setminus i_0}\mathcal{V}(F_i)\not\subseteq \mathcal{V}(F_{i_0})
\]
Without loss of generality, we may assume $i_0=k'$.
Note $\mathcal{V}(F_i)=\bigcup_{j=1}^{d_i} \mathcal{V}(Q_{i,j})$ for $i\in [k']$.
So there exists $(j_1,\dots, j_{k'-1})\in [d_1]\times\dots\times[d_{k'-1}]$ such that
\[
\bigcap_{i=1}^{k'-1}\mathcal{V}(Q_{i,j_i})\not\subseteq \mathcal{V}(F_{k'}). 
\]
Let $\mathcal{V}_0$ be an irreducible component of $\bigcap_{i=1}^{k'-1}\mathcal{V}(Q_{i,j_i})$ such that $\mathcal{V}_0\not\subseteq \mathcal{V}(F_{k'})$. Let $d_0=\dim(\mathcal{V}_0)\geq 0$.
By \cref{lem_hyperplane}, we have $d_0\geq n-k'+1$ and the variety $\mathcal{V}_0\cap \mathcal{V}(F_{k'})=\bigcup_{j=1}^{d_{k'}} (\mathcal{V}_0\cap \mathcal{V}(Q_{k', j}))$ has dimension at most $d_0-1$.
For each $j\in [d_{k'}]$, the degree of $\mathcal{V}_0\cap \mathcal{V}(Q_{k', j})$ is at most $r^k$ by \cref{lem_hyperplane} (or by B\'ezout's inequality \cite{Hei83}).
By $(n, r^k, \frac{1}{4d})$-evasiveness of $\mathcal{H}'$ and the union bound,
all but at most $\frac{1}{4}$-fraction of $W\in\mathcal{H}'$ evade $\mathcal{V}_0\cap \mathcal{V}(Q_{k', j})$ for $j=1,2,\dots,d_{k'}$.

Consider any $W\in\mathcal{H}'$ that evades $\mathcal{V}_0\cap \mathcal{V}(Q_{k', j})$ for $j=1,2,\dots,d_{k'}$. We just need to prove $\widetilde{E}|_W\neq 0$, or equivalently, $W\not\subseteq\mathcal{V}(\widetilde{E})$. Assume to the contrary that $W\subseteq\mathcal{V}(\widetilde{E})$.
Then $W\cap \mathcal{V}_0\subseteq\mathcal{V}(\widetilde{E})$. 
So
\begin{equation}\label{eq_compare}
W\cap \mathcal{V}_0=W\cap \mathcal{V}_0\cap \mathcal{V}(\widetilde{E})=W\cap \mathcal{V}_0\cap \mathcal{V}\left(\prod_{j=1}^{d_{k'}} Q_{k',j}\right)=\bigcup_{j=1}^{d_{k'}} (W\cap \mathcal{V}_0\cap\mathcal{V}(Q_{k', j}))
\end{equation}
where the second equality holds since $\widetilde{E}\equiv \prod_{j=1}^{d_{k'}} Q_{k',j}$ modulo the ideal $I_0:=\langle Q_{1,j_1},\dots, Q_{k'-1,j_{k'-1}}\rangle$ by \eqref{eq_simple} and $\mathcal{V}_0\subseteq \bigcap_{i=1}^{k'-1}\mathcal{V}(Q_{i,j_i})=\mathcal{V}(I_0)$.
We know the dimension of $\bigcup_{j=1}^{d_{k'}} (\mathcal{V}_0\cap \mathcal{V}(Q_{k', j}))$ is at most $d_0-1$.
So by the choice of $W$, the dimension of $\bigcup_{j=1}^{d_{k'}} (W\cap \mathcal{V}_0\cap\mathcal{V}(Q_{k', j}))$ is at most $(k-1)+(d_0-1)-n$.
However, by \cref{lem_dimcap}, the dimension of $W\cap \mathcal{V}_0$ is at least $(k-1)+d_0-n\geq 0$, where we use the fact $d_0\geq n-k'+1\geq n-k+1$.
This contradicts \eqref{eq_compare}.  So $\widetilde{E}|_W\neq 0$.
\end{proof}

\section{Open Problems and Future Directions} \label{sec_open}

We have seen that constructing explicit variety evasive subspace families is a natural problem that generalizes important problems in algebraic pseudorandomness and algebraic complexity theory, including deterministic black-box polynomial identity testing (evading varieties of codimension one) and constructing explicit lossless rank condensers (evading varieties of degree one). It is closely connected with advanced topics in algebraic geometry such as Chow forms and Chow varieties, and has applications to derandomizing PIT and non-explicit results in algebraic geometry like Noether's normalization lemma. 

There are many interesting open problems and potential future directions. We list some of them here.

\begin{enumerate}
\item \cref{thm_main}  focuses on subvarieties of bounded degree in a projective or affine space. Are there other interesting families of varieties for which we could construct explicit variety evasive subspace families? Families that are defined computation-theoretically may be particularly interesting, as many results of this kind are already known for polynomial identity testing.

\item Can explicit variety evasive subspace families be used to derandomize other non-explicit results in algebraic geometry?

\item Can our explicit construction in \cref{thm_main}  be improved? In the case $k=0$ and the case $d=1$, there are optimal or essentially optimal constructions, and our construction indeed degenerates into these constructions.
In general, however, there is a significant gap between the upper bound in \cref{thm_main} and the lower bound in \cref{thm_lb}.
\item Extending the notion of \emph{strong} lossless rank condensers \cite{FG15}, one could strengthen the definition of $(\mathcal{F},\epsilon)$-evasive subspace families in \cref{defi_evasive} by bounding the total deviation of the dimension instead of the number of bad subspaces.
At the same time, one could consider the setting where there is a gap between $\dim(\mathcal{V}_1)$ and $\codim(\mathcal{V}_2)$, as in typical applications of \emph{subspace designs} \cite{GX13, GR20, GRX21}.
 Alternatively, one could relax the definition by allowing $\dim(\mathcal{V}_1\cap\mathcal{V}_2)$ to be slightly greater than $\dim(\mathcal{V}_1) + \dim(\mathcal{V}_2) - n$, which is related to the notion of \emph{lossy} rank condensers in \cite{FG15}.
It is natural to study explicit constructions of these variants and their applications,  which can be seen as extensions of the theory of ``linear-algebraic pseudorandomness'' \cite{FG15} to a nonlinear setting.

\item Could our lower bound (\cref{thm_lb}) be extended to the affine case or to a ``lossy'' relaxation of the problem?

\item Is there a more effective bound for the entries of the generating matrices that are used in the non-explicit construction (\cref{thm_nonexplicit})?

\item When $n-k=O(1)$, our lower bound (\cref{thm_lb}) is only polynomial in $n$ and $d$. So one question is if there are explicit constructions of polynomial size when $n-k=O(1)$. 

As a concrete special case, consider the problem of constructing an explicit affine $(n-2)$-subspace family $\mathcal{H}$ on $\aff^n$ such that $\mathcal{H}$ is evasive for  degree-$d$ curves that are images of morphisms $\aff^1\to\aff^n$.
Note that for  $\varphi: \aff^1\to\aff^n$ corresponding to a ring homomorphism $\varphi^\sharp: \F[X_1,\dots,X_n]\to\F[Y]$,
an affine $(n-2)$-subspace defined by affine linear polynomials $\ell_1$ and $\ell_2$ evades the curve $\mathrm{Im}(\varphi)$ iff $\varphi^\sharp(\ell_1)$ and $\varphi^\sharp(\ell_2)$ have no common root. Using \emph{resultants}, we could reduce this problem to black-box PIT for symbolic determinants. 
Unconditionally, \cref{thm_main} also yields an explicit construction of polynomial size when $d=O(\log n)$.
We are not aware of any unconditional derandomization whose time complexity is subexponential in $\min\{n,d\}$, however.

%

\end{enumerate}

\section*{Acknowledgements}

\noindent We thank Nitin Saxena, Noga Ron-Zewi, Amit Sinhababu, and Suryajith Chillara for helpful discussions.

\bibliographystyle{alpha}
\bibliography{ref}

\end{document}